\newcommand*\diff{\mathop{}\!\mathrm{d}}
\def\Re{\mathbf{R}} \def\R{\mathbf{R}}
\def\ep{\varepsilon}
\def\al{\alpha}
\def\la{\lambda}
\def\da{\delta}
\def\g{\gamma} 
\def\phi{\varphi}
\def\one{\mathbf{1}}
\DeclareMathOperator{\supp}{supp}
\def\ul{\underline}
\newcommand{\df}[1]{\textit{#1}}
\newcommand{\norm}[1]{\| #1 \|}
\newcommand{\abs}[1]{| #1 | }
\def\paragraph{\@startsection{paragraph}{4}%
  \z@\z@{-\fontdimen2\font}%
  {\normalfont\bfseries}}
\renewcommand{\df}[1]{\textbf{\textit{#1}}}
\def\cite{\citet}
\newtheorem{theorem}{Theorem}
\newtheorem{corollary}[theorem]{Corollary}
\newtheorem{lemma}[theorem]{Lemma}
\theoremstyle{remark}
\newtheorem*{remark}{Remark}
\begin{document}

\title{Decision theory and the \\ ``almost implies near'' phenomenon.\thanks{We are grateful to Marcelo Gallardo for comments on a previous draft.} }

\author{\begin{tabular}{cc}
Christopher P. Chambers\thanks{Chambers: \href{mailto:Christopher.Chambers@georgetown.edu}{Christopher.Chambers@georgetown.edu}.} & Federico Echenique\thanks{
Echenique: \href{mailto:fede@econ.berkeley.edu}{fede@econ.berkeley.edu}.} \\
{\small Georgetown University} & {\small UC Berkeley} 
\end{tabular}
}

\maketitle

\begin{abstract}
We examine behavioral axioms in decision theory that are satisfied approximately rather than exactly. We demonstrate that in key domains---decisions under risk, uncertainty, and intertemporal choice---behavior that \emph{almost} satisfies an axiom implies the existence of a utility function that is \emph{near} one that adheres to the standard theoretical representation (e.g., expected utility, or exponentially discounted utility). We explicitly quantify the distance between the utility that captures actual behavior and the ideal theoretical utility as a function of the measured deviation from the axiom. This result formally connects two distinct quantitative exercises: measuring empirical deviations from theory and utilizing approximate optimization. Effectively, we show that small deviations from behavioral axioms rationalize the use of standard models as valid approximations.
\end{abstract}

\section{Introduction}

We study behavioral axioms that need not hold exactly, only approximately. In practice, economists routinely overlook small departures from theoretical predictions. This tolerance motivates relaxed versions of the standard axioms in decision theory and behavioral economics.

Across several core domains of decision theory, we show that such a relaxed axiom generally delivers \emph{two} utility representations. One utility function directly tracks observed behavior, encoding the actual deviations from the exact axiom. The other is a ``benchmark'' utility that satisfies the standard, non-relaxed axiom and admits the usual representation. For example expected utility, or exponential discounting. Our main results show that these two utilities must be close to each other, with the distance between them tightly controlled by the size of the axiom’s violation. Put plainly: behavior that \emph{almost} satisfies an axiom generates a utility that lies \emph{near} one that satisfies it exactly.\footnote{The phrase ``almost implies near'' is from \cite{anderson1986almost}, who analyzes the general mathematical phenomenon in which an object that almost satisfies a property guarantees the existence of a nearby object that satisfies it exactly. One of Anderson's applications is to the stability of the Cauchy equation, which is closely related to the techniques we use in some parts of our analysis.}

The standard program in decision theory is to prove that an axiom $A$ is equivalent to a specific functional form of a representing utility $v$. For instance, in the case of choice over risky lotteries, let $A$ be the independence axiom; then $A$ (plus standard auxiliary assumptions) yields an expected utility representation. We instead work with relaxations $A(\ep)$ of $A$, meaning that $A$ holds only ``up to $\ep$.'' Crucially, in our framework, $\ep$ is an observable, empirically meaningful quantity that can, in principle, be measured with data.

Our results say: if the $\ep$-relaxed axiom $A(\ep)$ holds, then there exists a number $\da$ and two suitably normalized utility functions, $u$ and $v$, such that $u$ represents the observed behavior, $v$ is an expected utility representation, and $u$ and $v$ are $\delta$-close. Moreover, $\delta$ and $\ep$ are quantitatively linked: smaller violations of the axiom (smaller $\ep$) guarantee a smaller distance $\delta$ between $u$ and $v$. In fact our main results connect $\da$ and $\ep$ quantitatively. Thus, optimizing the ``ideal'' objective $v$ automatically delivers an approximately optimal choice for the behavioral utility $u$.

Our results only become meaningful once we fix a canonical normalization for the functions $u$ and $v$. Without it, any monotone transformation of these utilities could make the distance between them arbitrarily small or large. To avoid this arbitrariness, we impose a specific normalization that anchors the magnitude of $\da$ and makes its size interpretable.

We see our results as compelling for several distinct reasons.

First, empirical work often interprets the distance between two utilities as a measure of the distance between theory and data. To use a recent example, \cite*{mcgranaghan2024distinguishing} argue that many of the most common rejections of expected utility theory can be rationalized by a relatively small perturbation added to expected utility theory (a similar point is made earlier in \cite{loomes2005modelling}). In particular, some of the most studied versions of the Allais paradox can be rationalized with an additive shock to a standard expected utility function. What has been missing, we think, is a tight connection between relaxing the axioms underlying the theory and the existence of a ``close by'' representation within the theory. In what sense does a small perturbation in expected utility arise from a small deviation from the main behavioral axiom underpinning expected utility theory?

Second, approximate optimization, or $\ep$-maximization, is a common way of describing behavior in complex optimization problems. The idea goes back to \cite{simon1955behavioral}. Instead of assuming or requiring that an objective function achieves an exact optimum, it is common to look for a solution that is not much worse than optimal. Approximate optimization is very common in problems with computational constraints and is popular among computer scientists. It is also common in game theory and learning, though it remains an ad-hoc procedure that can be met with suspicion by economists due to its cardinal nature. 

We connect two quantitative exercises: measuring deviations from axioms and approximating optimization. One exercise measures the degree of consistency between the theory and behavior by the size of the mistake (as in, for example, Quantal Response Equilibrium; see \cite{mckelvey1995quantal}) that is needed to explain the observed behavior. Another exercise is approximate maximization: observed choices that are close to optimal according to the theory. Our paper shows a sort of equivalence between these two views. We connect the approximation in the second exercise to the measured deviation in the first. In particular, our results help ground the cardinal assumptions made whenever an approximation algorithm is used. So our paper provides, we think, a new and hopefully useful perspective on approximate optimization.

Our results pertain to three different choice domains: decisions under risk, decisions under uncertainty, and intertemporal choice. These were chosen to make a point. The domains reflect the most common areas of study in decision theory and behavioral economics. In each case, we obtain results that follow a similar pattern. There is an axiom that is generally viewed as the key property in obtaining the most commonly used class of utility representation. We present a relaxation of the axiom, which can be viewed as the extent to which the axiom is satisfied by the primitive preference relation. The relaxed axiom is then connected to the existence of a utility representation that is close to a rationalization of the preference, while still satisfying the exact version of the axiom. For example, for choice under risk (Section~\ref{sec:EU}) we show that if a preference satisfies an approximate version of the independence axiom, then there is an expected utility function that is close to a utility representation for the preference.

For choice under uncertainty, we work with a streamlined version of the Anscombe–Aumann framework that permits choice over state-contingent monetary payoffs. Our main findings are in Section~\ref{sec:AA}. There, once again, a weakened form of the independence axiom guarantees the existence of two utility representations: one that directly represents the given preference and another that is linear. These two representations are provably close, with the gap precisely controlled by the initial violations of independence. We further show that when the preference is homothetic, the linear representation becomes exact. Since the literature on choice under uncertainty also emphasizes homogeneity and uncertainty aversion, we derive approximate analogues of these properties and prove the associated approximate representation theorems.

We then apply our general results to two prominent models: maxmin expected utility and the smooth ambiguity model. The maxmin model is homothetic, and we demonstrate that our main theorems cannot be used to approximate it with a subjective expected utility representation. Turning to the smooth ambiguity model, we study specifications in which departures from ambiguity neutrality fade away as the monetary stakes of acts grow large. For these particular versions of the smooth model, we obtain a global approximation via a subjective expected utility representation. The message is not that one should expect ambiguity neutrality for high-stakes acts, but rather that behavior in large-stakes domains is what ultimately determines whether an expected utility approximation exists.

Finally, in the case of intertemporal choice, we look at the choice over so-called dated rewards (or delayed rewards). This environment is common in experimental tests of the theory. Again we provide three approximation results assuming versions of the stationary axiom. The results are in Section~\ref{sec:dated}, where we present two different treatments depending on whether we assume discrete or continuous time.

In two of our results, we show that it is possible to recover an exact representation result using a relaxed version of the relevant axiom. When we assume a relaxed version of independence together with homotheticity, we obtain an exact linear representation (Theorem~\ref{thm:AA}). In our discussion of intertemporal choice, a relaxed version of stationarity, when restricted to preferences with strictly decreasing discount factors, implies an exact geometric discounting representation (Theorem~\ref{thm:exp2}).

In terms of the relevant techniques, we have relied heavily on the theory of stability of functional equations for the findings in Section~\ref{sec:AA} and~\ref{sec:dated}. An exposition of this theory can be found, for example, in \cite*{hyers2012stability}.

\paragraph{Related literature.}

The closest work to ours examines the distance between theory and behavior through axioms that are essentially tests. They are conditions that a finite dataset may or may not satisfy. The papers by \cite{declippel2023relaxed} and  \cite*{echenique2023approximate}, in particular, work from the first-order conditions of the optimization problem that agents would be solving if they behaved as the theory predicts. They then relax these conditions and consider a model to be consistent with the theory if the relaxation needed is not too large. The focus of these papers is on developing an empirically tractable methodology, with an emphasis on a particular domain of economic theory. The starting point is the theory, not the axioms. In contrast, we have started from a given axiom, understood as the substantive primitive behavior that the theory seeks to capture, and considered the consequences of relaxing the axioms. Instead of developing a test for a particular theory, we aim to make a general methodological point. The point is that decision theory does not need to be black and white. It admits some shades of gray. An axiom can hold approximately, and this can have some interesting consequences. We stress how one can connect the degree to which an axiom is relaxed to the approximation guarantee in terms of an objective function to optimize.

There is a more distant and more applied literature that looks at the discrepancy between data and rationality broadly defined: the fundamental ideas are laid out in \cite{afriat1973system},\cite{VARIAN1990125}, and \cite{houtman1985determining}. This literature does not establish a formal connection between ``almost'' and ``near.'' Instead, it relies on the intuition that large measured deviations from rationality make the data difficult to reconcile with the theory.

\section{Motivation.} 

We discuss a version of the Allais paradox of expected utility theory, and how it is related to the approximate satisfaction of the axioms underlying the theory.\footnote{A discussion of different violations of expected utility can be found in \cite{machina87}. A  very extensive discussion of the earlier empirical literature may be found in \cite{camerer95}. A recent meta study focusing on a particular patter of violations of expected utility can be found in \cite{blavatskyy2023common}.} 
Consider the following lotteries over monetary payments:

\begin{itemize}
\item Lottery $A$ yields \$4000 with probability 80\% and \$0 with probability 20\%.
\item Lottery $B$ yields \$3000 with probability 100\%.
\item Lottery $C$ yields \$4000 with probability 20\% and \$0 with probability 80\%.
\item Lottery $D$ yields \$3000 with probability 25\% and \$0 with probability 75\%.
\end{itemize}

These lotteries are variants of an example discussed by \cite{allais1953comportement}. A common pattern of choice is to prefer lottery $B$ over $A$, and $C$ over $D$.  \cite{kanhneman1979prospect} carried out an experiment in which they asked respondents to make these choices (using the same numbers, but expressed in pounds, the Israeli currency of the time). In their experiment, 82\% of respondents preferred $B$ over $A$, while 83\% preferred $C$ over $D$. This pattern of choices is, however, inconsistent with expected utility theory. This is easy to see directly, but it is also instructive to note that it violates the independence axiom, which states that if $p,q,r$ are lotteries and $p\succeq q$, then for any $\la\in (0,1)$, $\la p + (1-\la) r\succeq \la q+(1-\la) r$. Here, when $B$ is preferred to $A$, the axiom requires that $D$ is preferred to $C$ because $C$ may be obtained as $0.25\times A + 0.75\times 0$ and $D$ as $0.25\times B+0.75\times 0$. The pattern of choices discovered by Kahneman and Tversky is often called the common ratio effect.

The common ratio effect and other versions of the Allais paradox can be rationalized using non-expected utility; perhaps the best known being Prospect Theory. To provide an example, we shall use Cumulative Prospect Theory with the functional forms proposed by \cite{kanhneman1979prospect} and the parameter estimates of \cite{wu1996curvature}. This means evaluating a lottery with two-point support, yielding $x$ with probability $p$ and $z$ with the complementary probability, by the utility function $g(p)w(x)+(1-g(p))w(z)$, where $w(x)=x^{0.54}$ and 
$$g(p)=p^{0.74}[p^{0.74}+(1-p)^{0.74}]^{-1/0.74}.$$ Such a representation  can rationalize the pattern of choices in the Allais paradox: the utility of $B$ is larger than the utility of $A$, while the utility of $C$ is larger than the utility of $D$. Using these numbers, we can predict that the lottery $D'$ obtained as $0.27\times B+0.73\times 0$ would actually be preferred to lottery $C$. So after we amend one of the lotteries, the comparisons of $A$ to $B$ and of $C$ to $D'$ do not exhibit the Allais pattern. Compare $D'$ to $D$. The difference is quite small.

The point is that such a small difference in probabilities suffices to ``amend'' the independence axiom, at least as far as these four lotteries go. In other words, by adjusting the mixture probabilities by $\ep=0.02$ in the requirements of the independence axiom, we can reconcile the non-expected utility behavior encoded in cumulative prospect theory with an approximate version of the independence axiom. Of course, this is constrained to the choices in the Allais paradox, not the global behavior of the theory. Our results will imply that whenever an adjustment to the axiom is satisfied by the observed choices, there is an expected utility functional form that approximates a utility rationalizing the choices.  

\begin{figure}
    \centering
    \includegraphics[width=0.5\linewidth]{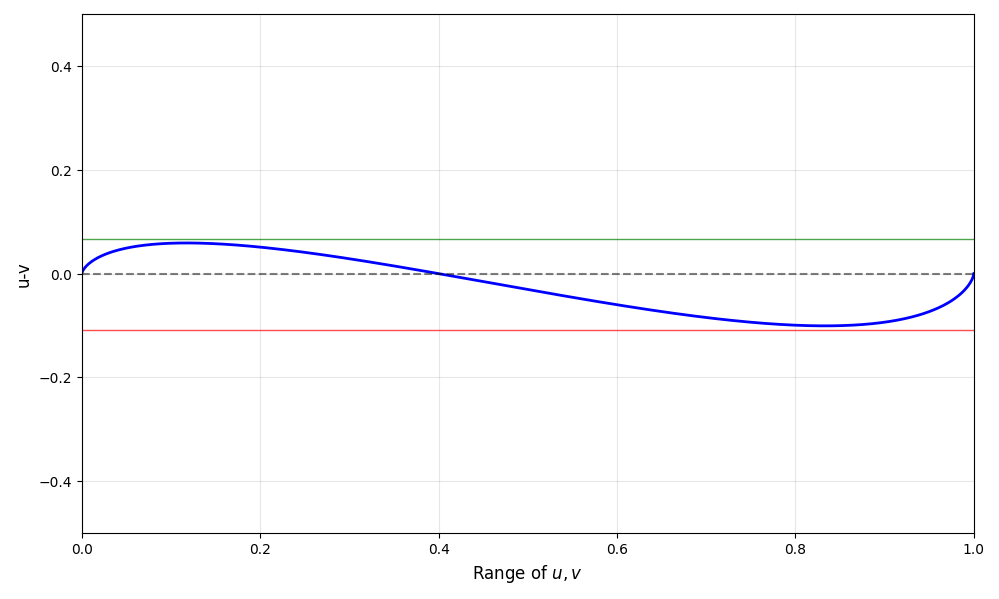}
    \caption{Utility differences: Cumulative Prospect Theory and Expected Utility.}
    \label{fig:prospectthy}
\end{figure}

Figure~\ref{fig:prospectthy} illustrates the global differences in terms of normalized utilities between the cumulative prospect theory in the specification estimated by \cite{wu1996curvature} and an expected utility function that achieves the same values at the best and worst prizes: $0$ and $1$. We focus on simple lotteries that offer a chance $p$ of winning a best prize versus with utility $1$ and a worst prize with utility $0$. Under expected utility theory, the value of such a lottery is simply the probability itself, $v(p)=p$. Under Cumulative Prospect Theory (using the specification estimated by \cite{wu1996curvature}), this probability is distorted by a non-linear weighting function, $u(p)=w(p)$. The figure plots the difference between these two valuations, $u(p) - v(p)$, effectively showing the size of the deviation from the standard axiom. As the graph demonstrates, while the deviation is real, the two utility evaluations remain quite close, differing by at most $0.1$.

\section{Model and results}

We proceed to revisit some of the most popular models in decision theory and propose an ``almost implies near'' approach to the best-known axioms and utility representations in these domains.

\subsection{Basic definitions and notation}
    
A binary relation $\succeq$ on a set $X$ is a \df{weak order} if it is complete and transitive. All preference relations are assumed to be weak orders. If $x\in X$, then the set  $\{z\in X \mid z\succeq x\}$ is the \df{upper contour set} of $\succeq$ at $x$, and $\{z\in X \mid x\succeq z\}$ is the \df{lower contour set} of $\succeq$ at $x$.  When $X$ is a topological space and the upper and lower contour sets are closed, we say that $\succeq$ is \df{continuous}.

Let $x,y\in\Re^d$. We write $x\geq y$ when $x_i\geq y_i$, $i=1,\ldots,d$; $x>y$ when $x\geq y$ and $x\neq y$; and $x\gg y$ when $x_i> y_i$, $i=1,\ldots,d$. The set of $\{x\in\Re^d \mid x\geq 0\}$ is denoted by $\Re^d_+$, and the set of $x\gg 0$ by $\Re^d_{++}$. The vector $(1,\ldots,1)$ in $\Re^d$ is denoted by $\one$. The norm $\norm{x}_{\infty}$, for $x\in\Re^d$, equals $\max \{\abs{x_i} \mid 1\leq i\leq d \}$.

The \df{simplex} in $\Re^d$ is the set $\Delta=\{p\in\Re^d_+ \mid \sum_{i=1}^d p_i=1 \}$. The \df{support} of $p\in\Delta$, denoted by $\supp(p)$, is the set of $i\in \{1,\ldots,d\}$ with $p_i>0$. 

A \df{preference} on $\Re^d_+$ is a binary relation that is complete and transitive. These are also called weak orders. A function $u:\Re^d_+\to\Re$ is a \df{representation} of $\succeq$ if $x\succeq y$ iff $u(x)\geq u(y)$. A representation $u$ is \df{normalized} if $u(c\one)=c$ for all $c\in\Re_+$.

\subsection{Expected Utility Theory}\label{sec:EU}

Our first results follow the motivation very closely. We consider decision-making under risk, the realm of the von Neumann--Morgenstern theory of expected utility for objective lotteries. In decision-making under risk, the primitive choice space is the set of all lotteries over a finite set of $d+1$ prizes. 

Let $\Delta=\{p\in\Re^{d+1}_+ \mid  \sum_{i=1}^{d+1}p_i =1\}$ be the $d$-dimensional simplex in $\Re^{d+1}$. The elements of $\Delta$ are called \df{lotteries}. The lotteries that assign probability one to a prize are called \df{degenerate}.

Let $\succeq$ be a binary relation on $\Delta$, and consider the following axioms on $\succeq$, where $\ep> 0$.

\begin{enumerate}
    \item Continuous weak order: $\succeq$ is continuous, complete and transitive. 
    \item Extremality: Let $\bar p$ and $\ul p$ be, respectively, best and worst elements of $\Delta$ according to $\succeq$. That is, $\bar p\succeq p \succeq \ul p$ for all $p\in \Delta$. We assume that $\bar p$ and $\ul p$ are degenerate lotteries.
    \item Monotonicity: If $\la,\la'\in [0,1]$ with $\la>\la'$ then $\la\bar p+(1-\la)\ul p\succ \la'\bar p+(1-\la')\ul p$.
    \item $\ep$-Reduction of compound lotteries: If
\begin{align*}
p_0 &\sim \al_0 \bar p + (1-\al_0)\ul p \\
p_1 &\sim \al_1 \bar p + (1-\al_1)\ul p \\
\la p_0 + (1-\la) p_1 & \sim \al \bar p + (1-\al) \ul p
\end{align*}then 
\[
\abs{\al - (\la \al_0 + (1-\la) \al_1)}<\ep.
\]
\end{enumerate}

Most of the properties in these axioms should be familiar or natural. That a preference is a continuous weak order is the requirement for Debreu's utility representation theorem. Continuity ensures that there exists a continuous utility representation of the preference. Extremality means that there are degenerate lotteries at the top and bottom of the preferences in question; this assumption is certainly natural when we consider monetary lotteries. The next axiom, Monotonicity, is also common and probably uncontroversial for monetary lotteries.

The last axiom is our contribution. It is a relaxed version of a reduction of the compound lotteries axiom. It states that a failure to reduce compound lottery $\la p_0 + (1-\la) p_1$ to its constituent lotteries can be amended by a change in lottery weights that is bounded by $\ep$. The parameter $\ep$ quantifies the extent to which the strict version of the reduction of compound lotteries is ``almost'' satisfied. 

The claims of indifference in the axiom are easy to implement experimentally since experimental methodology often uses multiple price lists (or the BDM mechanism) to elicit indifferences. 

Next, we claim that the axioms imply the existence of an expected utility function (an affine utility) that is close to a utility representation for $\succeq$. A form of converse statement is also true.

\begin{theorem}\label{thm:EU}
    If $\succeq$ satisfies the axioms, then there is a pair of functions $(u,\ell):\Delta\to [0,1]^2$ such that \begin{enumerate}
        \item $u$ represents $\succeq$ and $u(\Delta)=\ell(\Delta)=[0,1]$.
        \item $\ell$ is affine.
        \item $\abs{u(p)-\ell(p)}< (\abs{\supp(p)}-1) \ep$ for all nondegenerate $p\in\Delta$ and $u(\delta_i)=\ell(\delta_i)$ for all degenerate lotteries $\delta_i$, $i=1,\ldots,d+1$. In particular, $\norm{u-\ell}_{\infty}< d \ep$.
        \item Given $u$, $\ell$ is unique among the functions that satisfy these properties.
    \end{enumerate} 
    Conversely, if the pair $(u,\ell)$ satisfies these properties with $\norm{u-\ell}_{\infty}<\ep$, then $\succeq$ satisfies $4\ep$--Reduction of compound lotteries.
\end{theorem}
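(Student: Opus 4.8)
The plan is to build the behavioral utility $u$ by the standard calibration trick and then read the relaxed axiom as an approximate affinity (Jensen) condition on $u$. First I would use Extremality, Monotonicity, and continuity to define, for each $p\in\Delta$, the unique $\al(p)\in[0,1]$ with $p\sim \al(p)\bar p+(1-\al(p))\ul p$: existence follows because the two closed sets $\{\la:\la\bar p+(1-\la)\ul p\succeq p\}$ and $\{\la:p\succeq\la\bar p+(1-\la)\ul p\}$ are nonempty and cover $[0,1]$, while uniqueness follows from strict Monotonicity along the segment $[\ul p,\bar p]$. Setting $u=\al$ gives a continuous representation of $\succeq$ with $u(\bar p)=1$, $u(\ul p)=0$, hence $u(\Delta)=[0,1]$. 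The crucial observation is that, because every lottery is calibrated in this way, the $\ep$-Reduction axiom is exactly the assertion that for all $p_0,p_1\in\Delta$ and $\la\in[0,1]$,
\[
\abs{u(\la p_0+(1-\la)p_1)-\la u(p_0)-(1-\la)u(p_1)}<\ep,
\]
i.e.\ $u$ is affine up to error $\ep$ on each binary mixture.

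Next I would define $\ell$ to be the affine function on $\Delta$ agreeing with $u$ on the vertices, $\ell(p)=\sum_i p_i\, u(\delta_i)$. This is affine by construction, agrees with $u$ on all degenerate lotteries, and being a convex combination of values $u(\delta_i)\in[0,1]$ with $\ell(\bar p)=1$, $\ell(\ul p)=0$, it satisfies $\ell(\Delta)=[0,1]$. The heart of the argument---and the step I expect to be the main obstacle---is the estimate $\abs{u(p)-\ell(p)}<(\abs{\supp(p)}-1)\ep$, which I would prove by induction on $k=\abs{\supp(p)}$, peeling off one vertex at a time. For $k\ge 2$ pick $i\in\supp(p)$, write $p=p_i\delta_i+(1-p_i)q$ with $q$ supported on $\supp(p)\setminus\{i\}$ (so $\abs{\supp(q)}=k-1$ and $p_i\in(0,1)$), and combine the approximate-affinity bound with affineness of $\ell$:
\[
\abs{u(p)-\ell(p)}\le \abs{u(p)-p_i u(\delta_i)-(1-p_i)u(q)} + (1-p_i)\abs{u(q)-\ell(q)}.
\]
The first term is $<\ep$ by the displayed inequality, and the second is $\le(1-p_i)(k-2)\ep$ by the inductive hypothesis (the base case $k=1$ giving equality at the vertices), so the sum is $<(k-1)\ep$. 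The strictness of the first term is what keeps the total bound strict at every stage, and careful bookkeeping of this strictness across the $k-1$ peelings is the delicate point.

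The remaining conclusions of the direct statement follow quickly. Since $\abs{\supp(p)}-1\le d$ for every $p$, we get $\abs{u(p)-\ell(p)}<d\ep$ pointwise; compactness of $\Delta$ together with continuity of $u-\ell$ upgrades this to $\norm{u-\ell}_\infty<d\ep$. Uniqueness of $\ell$ given $u$ is immediate: an affine function on the simplex is pinned down by its values on the $d+1$ affinely independent vertices, and property~(3) forces those values to equal $u(\delta_i)$.

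For the converse I would simply chain triangle inequalities. Assume $(u,\ell)$ satisfy (1)--(3) with $\norm{u-\ell}_\infty<\ep$, and suppose the three indifferences in the axiom hold. Because $\ell$ is affine with $\ell(\bar p)=1$ and $\ell(\ul p)=0$, we have $\ell(\al_0\bar p+(1-\al_0)\ul p)=\al_0$, and since $u$ represents $\succeq$ the first indifference gives $u(p_0)=u(\al_0\bar p+(1-\al_0)\ul p)$; combining with $\norm{u-\ell}_\infty<\ep$ yields $\abs{u(p_0)-\al_0}<\ep$, and likewise for $p_1$ and for $m:=\la p_0+(1-\la)p_1$. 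Writing $\ell(m)=\la\ell(p_0)+(1-\la)\ell(p_1)$ and inserting the bounds $\abs{u(p_j)-\ell(p_j)}<\ep$, $\abs{u(p_j)-\al_j}<\ep$, $\abs{u(m)-\ell(m)}<\ep$, and $\abs{u(m)-\al}<\ep$, the triangle inequality accumulates four copies of $\ep$ and gives $\abs{\al-(\la\al_0+(1-\la)\al_1)}<4\ep$, which is exactly $4\ep$-Reduction.
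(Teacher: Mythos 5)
Your proposal is correct and follows essentially the same route as the paper: the same calibrated utility $u(p)=\al$ with $p\sim\al\bar p+(1-\al)\ul p$, the same vertex-interpolating affine $\ell(p)=\sum_i p_i u(\delta_i)$, an inductive peeling argument that is the paper's Lemma~\ref{lem:approxCauchy} specialized to the vertex decomposition, and the same four-$\ep$ triangle-inequality chain for the converse. The only (immaterial) difference is that the paper states the approximate-affinity lemma for arbitrary finite convex combinations of arbitrary lotteries before specializing to vertices, whereas you run the induction directly on the support.
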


The proof of Theorem~\ref{thm:EU} and all results in the paper is in Section~\ref{sec:proofs}.

Note that Theorem~\ref{thm:EU} states that the functions $u$ and $\ell$ share a normalization, meaning that they are onto $[0,1]$ and coincide on the degenerate lotteries. The magnitude $\ep$ has an objective and observable meaning in the axiom of $\ep$-reduction of compound lotteries. It is, in principle, measurable experimentally. The translation of $\ep$ into a ``near'' guarantee requires some cardinal meaning that is ensured by the normalization in the theorem.

Second, we should emphasize that the approximation guarantee is better for the comparison of lotteries that have small support. This is the case in many experimental settings. In experiments, there is also a common finding that probabilities that are close to certainty are particularly distorted (\cite{kanhneman1979prospect}). So it is interesting that $u$ and $\ell$ coincide on degenerate lotteries. By continuity, then, the approximation guarantee should be better for lotteries that are close to being degenerate, or close to certainty (of course, to establish this quantitatively would require much stronger assumptions than what we have put in place here).

Theorem~\ref{thm:EU} relies on an approximate version of the axiom of reduction of compound lotteries, but it may be more natural to consider relaxing the independence axiom. In the development of expected utility theory, the independence axiom has played a very important role (\cite{fishburn89}). Very similar ideas to those that we used in Theorem~\ref{thm:EU} can be applied, although the resulting approximation guarantee is somewhat weaker. 

Consider then the following axiom:

$\ep$-Independence: if $p\sim q$, $\al\in [0,1]$, and $r\in \Delta$ then there is $\al'\in [0,1]$ with $|\al-\al'| <\ep$ such that $\al p + (1-\al) r\sim \al' q + (1-\al') r$.

\begin{theorem}\label{thm:EU2}
    Let $\succeq$ be a continuous weak order that satisfies $\ep$-independence. Then there exists a pair of functions $(u,\ell):\Delta\to\Re^2$ such that:
    \begin{enumerate}
    \item $u$ represents $\succeq$, $u(\Delta)=\ell(\Delta)=[0,1]$, and  $u(\delta_i)=\ell(\delta_i)$ for all degenerate lotteries $\delta_i$, $i=1,\ldots,d+1$. 
        \item $\ell$ is affine.
        \item $\norm{u-\ell}_{\infty}<(d+1)^2 \ep$.
    \end{enumerate}
\end{theorem}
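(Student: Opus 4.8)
The plan is to reduce Theorem~\ref{thm:EU2} to Theorem~\ref{thm:EU} by showing that a continuous weak order satisfying $\ep$-independence automatically satisfies Extremality, Monotonicity, and $2\ep$-Reduction of compound lotteries, and then quoting Theorem~\ref{thm:EU} with $\ep$ replaced by $2\ep$. To set up, relabel the prizes so that $\delta_1\succeq\cdots\succeq\delta_{d+1}$, write $z_\al:=\al\delta_1+(1-\al)\delta_{d+1}$ for the calibration segment joining the best and worst prize, and (once the structural axioms are in place) take $\bar p=\delta_1$, $\ul p=\delta_{d+1}$ as the extreme elements. The function $\ell$ will be the affine function $\ell(p)=\sum_i p_i u(\delta_i)$ delivered by Theorem~\ref{thm:EU}; it agrees with $u$ on the vertices by construction, and its range is $[0,1]$ precisely because Extremality forces $\max_i u(\delta_i)=1$ and $\min_i u(\delta_i)=0$.

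The engine is a two-step lemma, showing that $\ep$-independence implies $2\ep$-Reduction. Suppose $p_0\sim z_{\al_0}$, $p_1\sim z_{\al_1}$, and $\la p_0+(1-\la)p_1\sim z_\al$. Applying $\ep$-independence to the indifference $p_0\sim z_{\al_0}$, with mixing weight $\la$ and reference lottery $p_1$, yields $\la_1$ with $\abs{\la_1-\la}<\ep$ and $\la p_0+(1-\la)p_1\sim \la_1 z_{\al_0}+(1-\la_1)p_1$. Applying it again to $p_1\sim z_{\al_1}$, with weight $1-\la_1$ and reference $z_{\al_0}$, yields $\mu$ with $\abs{\mu-(1-\la_1)}<\ep$ and $\la_1 z_{\al_0}+(1-\la_1)p_1\sim \mu z_{\al_1}+(1-\mu)z_{\al_0}$. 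Since $\mu z_{\al_1}+(1-\mu)z_{\al_0}=z_{(1-\mu)\al_0+\mu\al_1}$, transitivity and monotonicity of the segment give $\al=(1-\mu)\al_0+\mu\al_1$, while $\abs{\mu-(1-\la)}<2\ep$. Writing $(1-\mu)=\la+\eta$ with $\abs{\eta}<2\ep$ gives $\al-(\la\al_0+(1-\la)\al_1)=\eta(\al_0-\al_1)$, and since $\al_0,\al_1\in[0,1]$ this is bounded in absolute value by $2\ep$. That is exactly $2\ep$-Reduction.

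The genuinely hard part, and the place where all the work lies, is establishing Extremality (that a best and a worst element may be taken degenerate) and Monotonicity of the calibration segment, since $\ep$-independence is stated only for indifferences and Theorem~\ref{thm:EU} assumes these as primitives. The crucial leverage is the \emph{existence} content of the axiom: for every $p\sim q$, every $\al$, and every $r$, a matching weight $\al'$ must exist making $\al p+(1-\al)r\sim\al' q+(1-\al')r$, and this requirement already rules out preferences (such as single-peaked ones) whose maximizer is interior, because for such preferences no $\al'$ delivers the required indifference in some configurations. I would exploit this, together with compactness of $\Delta$ and continuity, to show that $\delta_1$ dominates and $\delta_{d+1}$ is dominated by every lottery, and that $\al\mapsto z_\al$ is strictly increasing under $\succeq$; I expect to recover these only after a limiting argument and possibly only up to a controlled slack.

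With the three axioms verified, Theorem~\ref{thm:EU} applied at level $2\ep$ produces the pair $(u,\ell)$ with $u$ representing $\succeq$, $u(\Delta)=\ell(\Delta)=[0,1]$, $\ell$ affine, $u(\delta_i)=\ell(\delta_i)$, and $\abs{u(p)-\ell(p)}<(\abs{\supp(p)}-1)\,2\ep$, hence $\norm{u-\ell}_\infty<2d\ep$. Since $2d\le (d+1)^2$ for every $d\ge 1$, the stated bound $\norm{u-\ell}_\infty<(d+1)^2\ep$ follows; the generous constant leaves room to absorb whatever additional loss the approximate structural step introduces when Extremality and Monotonicity are only recovered imperfectly.
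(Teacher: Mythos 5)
Your reduction has one solid component and one missing load-bearing component, and the missing one is exactly where the theorem lives. The solid part: granted Extremality and Monotonicity, your two-step argument correctly shows that $\ep$-independence implies $2\ep$-Reduction of compound lotteries (each application of the axiom perturbs a weight by less than $\ep$, and $\abs{\eta(\al_0-\al_1)}\leq\abs{\eta}$), so Theorem~\ref{thm:EU} would then give $\norm{u-\ell}_{\infty}<2d\ep$, which is even sharper than $(d+1)^2\ep$. The gap: you never prove Extremality or Monotonicity; you only conjecture that they can be ``recovered,'' possibly ``only up to a controlled slack.'' This is not a detail. First, the implication is false as stated: the totally indifferent preference is a continuous weak order satisfying $\ep$-independence (indeed exact independence) and violates Monotonicity outright; and without strict monotonicity along the calibration segment, your key inference $z_\al\sim z_\beta\Rightarrow\al=\beta$ fails, as does the well-definedness of the calibration itself. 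Second, even under a non-triviality assumption, deriving these properties is genuine work: for instance, taking $r=q$ in $\ep$-independence forces indifference classes to be convex, from which one can argue that some vertex is globally best and some globally worst; but excluding indifference plateaus on the segment $[\delta_{d+1},\delta_1]$ only works when $\ep$ is small (for $\ep\geq 1$, plateau preferences such as $u(p)=\max(2p-1,0)$ on two prizes do satisfy the axiom while failing strict monotonicity). None of this is in your proposal. Third, the fallback of approximate Extremality/Monotonicity cannot be absorbed by Theorem~\ref{thm:EU}: its hypotheses are exact, so a version of it with slackened structural axioms would itself be a new theorem to prove, not something you can quote.

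The paper's proof is designed precisely to avoid this problem, and that is the real difference between the two routes. It never needs degenerate extrema or monotonicity of a degenerate-to-degenerate segment: it calibrates $u(p)$ against the global best and worst lotteries $\bar p,\ul p$ (which exist by continuity and compactness, with no claim that they are degenerate), then iterates $\ep$-independence $K$ times to match an indifferent mixture while moving each weight by less than $K\ep$ (Lemma~\ref{lem:approx0}), deduces $\abs{u(\sum_k\la_k p_k)-\sum_k\la_k u(p_k)}<K^2\ep$ (Lemma~\ref{lem:approx}), and finally sets $\ell(p)=\sum_i p_i u(\delta_i)$, obtaining the bound with $K=d+1$. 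The looser constant $(d+1)^2\ep$ is the price of not assuming the structure you are trying to import from Theorem~\ref{thm:EU}. If you completed your structural step (adding non-triviality and a restriction on $\ep$), you would obtain a sharper bound than the paper's; as written, the proposal is incomplete at its central step.
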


It is possible that Theorem~\ref{thm:EU2} admits a converse statement, like the one in Theorem~\ref{thm:EU}. At this point, this is an open question.

%%%%%%%%%%%%%%%%%%%%%%%%%
\section{Choice under uncertainty: the Anscombe-Aumann model}\label{sec:AA}

The previous results dealt with risk in the form of objective lotteries. Another important topic in decision theory is uncertainty, which refers to situations that do not come with known probabilistic assessments. We turn then to the standard workhorse model of subjective uncertainty in modern decision theory, the Anscombe-Aumann model. There is a finite state-space of cardinality $d$, and the objects of choice are state-contingent objective lotteries. In the standard model, the von Neumann--Morgenstern axioms are imposed on objective lotteries so that an expected utility representation over state-independent objective lotteries is ensured. Here, we do not bother to spell out this step; instead, we work with state-contingent utility. Alternatively, our model can be interpreted as being about state-contingent monetary payments for an agent who is risk neutral over objective monetary lotteries. 

In sum, the consumption space is $\Re^d_+$. Each element of $\Re^d_+$ is a state-contingent monetary payment. As a notational convention, for any real number $c\in\Re_+$,  the vector $c\one\in\Re^d_+$ that pays $c$ in each state is also denoted as $c$. So when we write $c$ for a $d$-dimensional vector, we mean $c\one$.

The version of the Anscombe-Aumann model that we adopt is the same as the standard model of consumer choice used in demand theory. So, it is perfectly possible to interpret our results as providing conditions for the approximate optimization of linear utility (meaning that the goods in question are perfect substitutes), quasi-concave utility, and so on.

We consider the following axioms on a preference relation. 

\begin{enumerate}
    \item Monotonicity: $x\leq y$ implies $y\succeq x$, and $x\ll y$ implies $y\succ x$.
\item Homotheticity:  $x\succeq y$ and $\la>0$ imply $\la x \succeq \la y$.
\item\label{ax:phiapprox} Given a function $\phi:X\times X\rightarrow \Re_+$; $\phi$-approximate independence:  for any $x,y\in X$, $\bar c_x,\ul c_x\in\Re_+$ and $\bar c_y,\ul c_y\in\Re_+$ such that
  $\bar c_x \succeq x \succeq \ul c_x$ and
  $\bar c_y \succeq y \succeq \ul c_y$, it holds that
    \[
  \frac{1}{2}\bar c_x + \frac{1}{2}\bar c_y + \phi(x,y)
  \succeq \frac{1}{2}x+\frac{1}{2}y
  \succeq \frac{1}{2}\ul c_x+\frac{1}{2}\ul c_y - \phi(x,y)
  \]
\end{enumerate}

The independence axiom for the Anscombe-Aumann model provides a linear representation. In our $\phi$-approximate independence axiom, the function $\phi$ measures the extent to which independence is violated. Theorem~\ref{thm:AA} below concerns situations when the magnitude of these violations grows sub-linearly with the amount of money involved, in the sense that $\phi(\al x,\al y)$ is $o(\al)$. This means that the violations of independence that one should be concerned about occur for scaled-up versions of any two acts (see Corollary~\ref{cor:smoothamb} below).

The meaning of the other axioms should be familiar. Monotonicity means that money is desirable in every state. Homotheticity states that all preferences are maintained by scaling; it is a common axiom used, for example, in the max-min representation result of \cite{gilboa1989maxmin}. Homotheticity is particularly useful here because it provides an exact, rather than an approximate, linear representation.

\begin{theorem}\label{thm:AA}
    Let $\succeq$ satisfy continuity, monotonicity, and $\phi$-approximate independence. Suppose that $\sum_{i=0}^{n} 2^{-i}\phi(2^{i}x,2^{i}y)\leq \Theta$ for all $n\geq 1$ and $x,y\in X$.
    \begin{enumerate}
        \item There exists $u:X\to \Re$ and a linear function $v:\Re^d\to\Re$ such that $u$ is a normalized representation of $\succeq$ and $\norm{u-v}_{\infty}\leq\Theta$.
        \item If $\succeq$ is homothetic, then $u=v$.
    \end{enumerate}
\end{theorem}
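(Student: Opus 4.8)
The plan is to construct the normalized representation $u$ directly as a certainty equivalent, reinterpret $\phi$-approximate independence as an approximate midpoint-affinity condition, and then invoke the Hyers--Ulam stability machinery for the Cauchy/Jensen equation. First I would build $u$: for each $x\in\R^d_+$ monotonicity gives $(\min_i x_i)\one\preceq x\preceq(\max_i x_i)\one$, while the strict ($\ll$) part of monotonicity makes $c\mapsto c\one$ strictly increasing in preference along the diagonal. Continuity together with the intermediate value theorem then yields a unique $c$ with $c\one\sim x$; setting $u(x)=c$ produces a normalized, nondecreasing representation with $u(0)=0$. The crucial device is to apply $\phi$-approximate independence with the \emph{exact} certainty equivalents $\bar c_x=\ul c_x=u(x)$ and $\bar c_y=\ul c_y=u(y)$. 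The two outer acts are then constant, so applying the representation $u$ to the resulting preference chain and using $u(c\one)=c$ collapses the axiom to the approximate Jensen inequality
\[
\Bigl| u\bigl(\tfrac12 x+\tfrac12 y\bigr)-\tfrac12 u(x)-\tfrac12 u(y)\Bigr|\le \phi(x,y),
\]
valid for all $x,y$. Specializing to $y=0$ gives the doubling estimate $\bigl|u(2w)-2u(w)\bigr|\le 2\phi(2w,0)$.

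Next I would run the Hyers construction in the ``scaling up'' direction dictated by the hypothesis. Define $v(x)=\lim_{n\to\infty}2^{-n}u(2^n x)$. The doubling estimate shows the increments $\bigl|2^{-(n+1)}u(2^{n+1}x)-2^{-n}u(2^n x)\bigr|$ are dominated by terms of the convergent series $\sum_i 2^{-i}\phi(2^i x,0)\le\Theta$, so the sequence is Cauchy and $v$ is well defined, with $u-v$ vanishing on the diagonal; telescoping the same increments controls $\norm{u-v}_\infty$ by $\Theta$. Additivity of $v$ follows by applying the Jensen inequality at the doubled arguments $2^{n+1}x$ and $2^{n+1}y$, dividing by $2^n$, and letting $n\to\infty$: the error $2^{-n}\phi(2^{n+1}x,2^{n+1}y)\to 0$ because the defining series converges, leaving $v(x+y)=v(x)+v(y)$. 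Since $v$ is a nondecreasing pointwise limit of the nondecreasing functions $2^{-n}u(2^n\cdot)$, the additive $v$ is monotone, hence linear, $v(x)=a\cdot x$ on $\R^d_+$; it extends to $\R^d$ by the same formula, and the diagonal normalization forces $\sum_i a_i=1$.

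Finally, for the homothetic case I would show $u$ is already positively homogeneous of degree one. If $u(x)=c$ then $x\sim c\one$, and homotheticity applied to both $x\succeq c\one$ and $c\one\succeq x$ gives $\la x\sim\la c\one$, i.e.\ $u(\la x)=\la u(x)$ for $\la>0$. Then $2^{-n}u(2^n x)=u(x)$ for every $n$, the defining sequence is constant, and $v=u$.

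I expect the main obstacle to be the stability core of the middle paragraph: arranging the telescoping so that the accumulated error is genuinely governed by the single uniform bound $\Theta$ (pinning the constant exactly requires careful indexing of the geometric weights), and passing cleanly from the additive limit to a bona fide linear function using only monotonicity rather than assuming $v$ is continuous. By contrast, the certainty-equivalent construction and the homothetic reduction are comparatively routine.
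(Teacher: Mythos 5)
Your proposal is correct and follows essentially the same route as the paper's proof: the certainty-equivalent construction of $u$, the collapse of $\phi$-approximate independence (applied with $\bar c_x=\ul c_x=u(x)$) to an approximate Jensen inequality, the Hyers-type limit $v(x)=\lim_{n\to\infty}2^{-n}u(2^n x)$ with the telescoping/reindexing bound $\norm{u-v}_{\infty}\leq\Theta$, the monotone-additive-implies-linear argument, and homogeneity of $u$ under homotheticity. The only cosmetic difference is that you obtain additivity of $v$ directly from the Jensen inequality at doubled arguments, whereas the paper first shows $v$ satisfies Jensen's equation exactly and then converts to the Cauchy equation using $v(0)=0$; both steps are equivalent.
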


\begin{remark}
The function $v$ in Theorem~\ref{thm:AA} can be interpreted as a subjective expected utility function. Indeed, 
in the proof of Theorem~\ref{thm:AA}, it turns out that $u(c\one)=c$ and \[
v(\mathbf{1}) = \lim_{n\to\infty} 2^{-n}u(2^n \mathbf{1}) = \lim_{n\to \infty} 2^{-n}(2^n) = 1.
\]
By defining $p_i=v(e_i)$, we can ensure that $1=v(\mathbf{1}) = \sum v(e_i) = \sum p_i$. Then the $p_i$ may be interpreted as subjective probabilities and we have that 
\[
v(x) = v(\sum_{i=1}^d e_i x_i)= \sum_{i=1}^d p_i x_i
\]
\end{remark}

The hypotheses in Theorem~\ref{thm:AA} are technical, but their meaning is simple, and the theorem has clear implications in different special cases. First, note that $\Theta$ captures the extent to which independence is violated and is, in principle, observable. Second, the message in the theorem is that violations of independence for large acts are particularly important in obtaining an approximation by means of a linear representation. This fact is formalized in Corollary~\ref{cor:smoothamb} below, where we consider the smooth ambiguity model of \cite{klibanoff2005}.

Note that for homothetic preferences, the approximation has no quantitative bite. Regardless of how large the deviations from independence, as measured by $\Theta$, are, we obtain an exact representation by means of a linear function. In other words, the summability hypothesis in the theorem, together with homotheticity, suffices. Below, we discuss the special case of maxmin expected utility preferences, which are always homothetic, and show why they do not satisfy the summability condition in the theorem.

In the absence of homotheticity, perhaps the simplest implication of the theorem is for the special case of a uniform upper bound on $\phi$. Our next result records the implications of bounding the values of $\phi$.

\begin{corollary} Let $\succeq$ satisfy continuity, monotonicity, and $\phi$-approximate independence.  Suppose that $\phi(x,y)\leq\ep$ for all $x$ and $y$. Then there are functions $u$ and $v$ as in the theorem with  $\norm{v-u}_{\infty}\leq 2\ep$.
  \end{corollary}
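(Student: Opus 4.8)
The plan is to derive this corollary directly from Theorem~\ref{thm:AA} by verifying that the uniform bound $\phi(x,y)\leq\ep$ forces the summability hypothesis of the theorem to hold with the explicit constant $\Theta=2\ep$. Since the hypotheses of the corollary (continuity, monotonicity, and $\phi$-approximate independence) are exactly those of the theorem, no new structural work is needed; the only task is to control the weighted sum $\sum_{i=0}^{n} 2^{-i}\phi(2^{i}x,2^{i}y)$ that appears in the theorem's assumption.

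First I would note that because $\phi(x,y)\leq\ep$ for \emph{all} arguments, in particular $\phi(2^{i}x,2^{i}y)\leq\ep$ for every $i$ and every $x,y\in X$. Substituting this bound termwise into the weighted sum reduces the estimate to a geometric series: for all $n\geq 1$,
\[
\sum_{i=0}^{n} 2^{-i}\phi(2^{i}x,2^{i}y)
\;\leq\; \ep\sum_{i=0}^{n} 2^{-i}
\;=\; \ep\,\frac{1-2^{-(n+1)}}{1-\tfrac12}
\;=\; 2\ep\bigl(1-2^{-(n+1)}\bigr)
\;\leq\; 2\ep .
\]
This shows that the summability hypothesis of Theorem~\ref{thm:AA} is satisfied with $\Theta=2\ep$, uniformly over $n$, $x$, and $y$.

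Having verified the hypothesis, I would simply invoke part~(1) of Theorem~\ref{thm:AA}, which then furnishes a normalized representation $u$ of $\succeq$ and a linear function $v$ with $\norm{u-v}_{\infty}\leq\Theta=2\ep$. This is precisely the claimed conclusion, so the argument terminates. There is essentially no obstacle here: the content of the corollary is entirely carried by Theorem~\ref{thm:AA}, and the corollary is a routine specialization obtained by replacing the abstract summability condition with the sharper, more transparent hypothesis that $\phi$ is uniformly bounded. The only quantitative point worth stating carefully is the summation of the geometric series, whose total $\sum_{i=0}^{\infty}2^{-i}=2$ is the source of the factor $2$ in the bound $2\ep$; the fact that each partial sum is strictly below this limit confirms that $\Theta=2\ep$ is a valid uniform bound for every finite $n$.
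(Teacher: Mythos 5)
Your proof is correct and is exactly the intended derivation: the paper states this corollary without a separate proof, treating it as the immediate specialization of Theorem~\ref{thm:AA} obtained by bounding $\sum_{i=0}^{n} 2^{-i}\phi(2^{i}x,2^{i}y)$ termwise by the geometric series $\ep\sum_{i=0}^{n}2^{-i}\leq 2\ep$, so that the theorem applies with $\Theta=2\ep$. Nothing is missing.
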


Theorem~\ref{thm:AA} establishes that when we impose Homotheticity, we obtain an exact linear representation. It is therefore natural to consider the case of approximate homotheticity.

Let $\bar\phi,\ul\phi : \Re^d_{++}\times \Re^d_{++}\times\Re_+\to\Re$ be two functions that serve as the parameters in the axiom. Suppose that $x\mapsto \bar\phi(x,y,\la)$ is weakly monotonically decreasing, and that $y\mapsto \ul\phi(x,y,\la)$ is weakly monotonically increasing.

A preference $\succeq$ satisfies $(\bar\phi,\ul\phi)$-homogeneity if, when $\la>0$,
\[
x\succeq y \text{ then } \la x + \bar\phi(x,y,\la) \succeq \la y \text{ and } \la x\succeq \la y - \ul\phi(x,y,\la)
\]

\begin{theorem}\label{thm:approxhomog}
    Suppose that $\succeq$ on $\Re^d_{++}$ satisfies continuity, monotonicity, and $(\bar\phi,\ul\phi)$-homogeneity. Suppose that there exists some $\eta>0$ with the property that 
    \[ \sum_{n=1}^{\infty} \frac{\ul\phi(\eta^n x,\eta^n y,\la)}{\eta^n}\leq\Theta
    \text{ and }  \sum_{n=1}^{\infty} \frac{\bar\phi(\eta^n x,\eta^n y,\la)}{\eta^n}\leq \Theta.\]
    Then there is $u,v:\Re^d_{++}\to\Re$ such that $u$ is a normalized representation of $\succeq$, $v$ is homogeneous, and $\norm{u-v}_{\infty}<2\Theta$.
\end{theorem}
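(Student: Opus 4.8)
The plan is to follow the template of Theorem~\ref{thm:AA}, replacing the dyadic rescaling by the multiplier $\eta$ and the linear limit by a homogeneous one; this is the ``direct method'' of Hyers--Ulam stability. First I would build the normalized representation $u$. Continuity and monotonicity make $c\mapsto c\one$ strictly increasing in $\succeq$, and every $x\in\Re^d_{++}$ satisfies $(\min_i x_i)\one\le x\le(\max_i x_i)\one$; the intermediate value property along the diagonal then produces a unique certainty equivalent $c(x)$ with $x\sim c(x)\one$. Setting $u(x)=c(x)$ yields a representation with $u(c\one)=c$, and in particular $u(x)\in[\min_i x_i,\max_i x_i]$, a fact I use repeatedly.

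The heart of the argument is a one-step ``approximate homogeneity of $u$''. Applying $(\bar\phi,\ul\phi)$-homogeneity to the indifference $z\sim u(z)\one$ with multiplier $\la$---once to the pair $(z,u(z)\one)$ and once to $(u(z)\one,z)$---and reading off the resulting certainty equivalents gives
\[
\la\, u(z)-\ul\phi(z,u(z)\one,\la)\ \le\ u(\la z)\ \le\ \la\, u(z)+\bar\phi(u(z)\one,z,\la).
\]
I then define $v(x)=\lim_{n\to\infty}\eta^{-n}u(\eta^n x)$ and establish that the limit exists by telescoping, using this estimate with $\la=\eta$ and $z=\eta^{n-1}x$ to control the increments $\eta^{-n}u(\eta^n x)-\eta^{-(n-1)}u(\eta^{n-1}x)=\eta^{-n}\bigl(u(\eta z)-\eta u(z)\bigr)$.

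The main obstacle---and precisely what the monotonicity hypotheses on $\bar\phi,\ul\phi$ are designed to overcome---is that the error terms above are evaluated at the uncontrolled point $u(\eta^{n-1}x)\one$, rather than at a scaled reference $\eta^{n-1}y$ of the kind appearing in the summability assumption. Here I would invoke $u(\eta^{n-1}x)\in[\eta^{n-1}\min_i x_i,\eta^{n-1}\max_i x_i]$: monotonicity of $\ul\phi$ in its second argument gives $\ul\phi(\eta^{n-1}x,u(\eta^{n-1}x)\one,\eta)\le\ul\phi(\eta^{n-1}x,\eta^{n-1}\bar y,\eta)$ with $\bar y=(\max_i x_i)\one$, while monotonicity of $\bar\phi$ in its first argument gives $\bar\phi(u(\eta^{n-1}x)\one,\eta^{n-1}x,\eta)\le\bar\phi(\eta^{n-1}\ul y,\eta^{n-1}x,\eta)$ with $\ul y=(\min_i x_i)\one$. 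Both bounding series are now exactly of the form controlled by $\Theta$, so the partial sums are Cauchy and $v$ is well defined; bounding the magnitude of each increment by the sum of its two one-sided errors and summing yields $\abs{u(x)-v(x)}<2\Theta$, the factor of two coming from combining the $\bar\phi$ and $\ul\phi$ contributions.

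It remains to show that $v$ is homogeneous. For $\la=\eta^{k}$ this is immediate from re-indexing the defining limit, giving $v(\eta^k x)=\eta^k v(x)$. For an arbitrary $\la>0$ I would apply the same sandwich estimate with multiplier $\la$ to $z=\eta^{n}x$, divide by $\eta^{n}$, and let $n\to\infty$: the central term $\la\,\eta^{-n}u(\eta^n x)$ tends to $\la v(x)$, the sandwiched quantity $\eta^{-n}u(\la\,\eta^n x)$ tends to $v(\la x)$, and the two error terms, being tails of the convergent series furnished by the summability hypothesis at $\la$, vanish. Hence $v(\la x)=\la v(x)$, which completes the argument.
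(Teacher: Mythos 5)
Your proposal is correct and follows essentially the same route as the paper's proof: both define $u$ via certainty equivalents, derive the one-step sandwich $\abs{u(\la z)-\la u(z)}\leq\max\{\bar\phi(u(z)\one,z,\la),\ul\phi(z,u(z)\one,\la)\}$, use monotonicity of $\succeq$ together with the monotonicity hypotheses on $\bar\phi,\ul\phi$ to replace the uncontrolled point $u(\eta^n x)\one$ by the scaled references $\eta^n(\min_i x_i)\one$ and $\eta^n(\max_i x_i)\one$, and then run the Hyers direct method to obtain $v(x)=\lim_n \eta^{-n}u(\eta^n x)$, its homogeneity, and the $2\Theta$ bound. The only cosmetic difference is that you telescope the increments directly while the paper proves the partial-sum bound by induction (following Czerwik and Jung), but these are the same argument.
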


The meaning of the hypothesis in the theorem is, essentially, that 
        \[ \lim_{n\to\infty} \frac{\ul\phi(\eta^n x,\eta^n y,\la)}{\eta^n}
        =  \lim_{n\to\infty} \frac{\bar\phi(\eta^n x,\eta^n y,\la)}{\eta^n}=0,\] and that this convergence happens fast enough.
The functions $\bar\phi,\ul\phi : \Re^d_{++}\times \Re^d_{++}\times\Re_+\to\Re$ measure the deviations from homogeneity in the axiom.  The theorem says, then, that to have any bite, deviations from homotheticity of preferences have to occur for large enough acts. For example, if $\bar\phi$ and $\ul\phi$ are bounded by a constant $M$, then for an arbitrarily large $\eta$, we can set $\Theta=\frac{M}{\eta-1}$.  Theorem~\ref{thm:approxhomog} then provides the existence of an arbitrarily close homogeneous approximation.

Finally, we turn to convexity, another common property in the literature on choice under uncertainty. In this context, convexity was introduced by \cite{schmeidler89} and \cite{gilboa1989maxmin} as a notion of uncertainty aversion. If we interpret the model as consumption or demand theory, convexity is also an important property. Convexity is key, for example, in guaranteeing the existence of competitive equilibria and in the study of their welfare properties. 

The ordinal notion of uncertainty aversion corresponds to a quasi-concave utility representation.  We propose a relaxed notion of convexity, or uncertainty aversion.

A preference $\succeq$ satisfies $\ep$-uncertainty aversion if $x\succeq c$ and $y\succeq c$ imply that $\la x + (1-\la) y \succeq c-\ep$ for any $\la\in (0,1)$.

\begin{theorem}\label{thm:quasiquasicon}
    If $\succeq$ is continuous, monotonic, and $\ep$-uncertainty averse, then there exists a normalized utility representation $u:\Re^d_+\to\Re$ and a quasi-convex function $v:\Re^d_+\to\R$ such that $\norm{u-v}_{\infty}\leq d\ep$.
\end{theorem}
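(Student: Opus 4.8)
The plan is to construct the normalized representation $u$ first, then define the candidate quasi-convex comparison function $v$, and finally bound their sup-distance using the $\ep$-uncertainty aversion axiom. Since $\succeq$ is continuous and monotonic on $\Re^d_+$, and money is strictly desirable along the diagonal (monotonicity guarantees $c\one \succ c'\one$ whenever $c>c'$), every act $x$ is indifferent to some unique certainty equivalent $c(x)\one$; setting $u(x)=c(x)$ gives a normalized representation with $u(c\one)=c$. This is the standard certainty-equivalent construction and should follow routinely from continuity plus strict monotonicity on the diagonal.

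The key idea is that $\ep$-uncertainty aversion is a statement that the upper contour sets of $u$ are \emph{almost} convex, quantified by $\ep$. Concretely, the axiom says that if $u(x)\geq c$ and $u(y)\geq c$ then $u(\la x+(1-\la)y)\geq c-\ep$. To extract a genuinely quasi-convex (equivalently, quasi-concave in the relevant sense---note the theorem asks for quasi-\emph{convex} $v$, matching the ``convexity'' terminology) function, I would define $v$ by taking, for each $x$, the infimal certainty level that can be reached by convex combinations, iterating the near-convexity. The natural candidate is
\[
v(x)=\inf\Bigl\{ \textstyle\sum_k \la_k\, u(z_k) \;\Bigm|\; x=\sum_k \la_k z_k,\ \la\in\Delta \Bigr\},
\]
the lower convex/quasi-convex envelope of $u$, which is automatically quasi-convex by construction. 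The work is then to show this envelope does not fall below $u$ by more than $d\ep$.

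The main obstacle, and the heart of the proof, is the quantitative bound $\norm{u-v}_{\infty}\leq d\ep$. Since $v\leq u$ trivially (take the degenerate combination $x=x$), I must lower-bound $v$. I expect to argue that when $x=\sum_k \la_k z_k$ is written as a convex combination, one can reduce to combinations of at most $d+1$ points by Carathéodory, and then apply the near-convexity of upper contour sets step by step. Each of the (at most $d$) successive pairwise mixings needed to build up the full combination degrades the guaranteed certainty level by at most $\ep$, so that $u(x)\geq \sum_k\la_k u(z_k)-d\ep$, giving $u(x)-v(x)\leq d\ep$. The delicate point is organizing the iterated application of the axiom so the losses add rather than compound multiplicatively, and confirming that the factor is exactly $d$ (one fewer than the Carathéodory bound $d+1$, consistent with the support-size bookkeeping in Theorem~\ref{thm:EU}). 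I would verify the induction carefully: mixing two points loses $\ep$, and folding in each additional vertex of the representing simplex loses one more $\ep$, for $d$ total across $d+1$ vertices.
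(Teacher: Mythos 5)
Your construction of $u$ via certainty equivalents matches the paper's, but your construction of $v$ is the wrong object, and the claimed bound genuinely fails for it. You take the lower convex envelope of $u$, $v(x)=\inf\{\sum_k\la_k u(z_k) \mid x=\sum_k\la_k z_k\}$, which convexifies the \emph{cardinal values} of $u$. The axiom, however, is purely ordinal: it says that \emph{upper contour sets} are almost convex, and it gives no control over averages of utility values. Concretely, iterating the axiom over a combination of $K$ points lying in $U(c)=\{z \mid z\succeq c\one\}$ yields the min-based bound $u(\sum_k\la_k z_k)\geq \min_k u(z_k)-(K-1)\ep$; your step silently replaces the min by the average $\sum_k\la_k u(z_k)$, which does not follow. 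Worse, even your (unjustified) inequality $u(x)\geq\sum_k\la_k u(z_k)-d\ep$ points the wrong way: taking infima over decompositions it gives $u(x)\geq v(x)-d\ep$, which is vacuous since $v\leq u$; what you actually need is $\sum_k\la_k u(z_k)\geq u(x)-d\ep$ for \emph{every} decomposition, i.e.\ approximate \emph{convexity} of $u$, which the hypotheses do not supply. A counterexample kills the construction outright: let $d=2$ and $u(x)=\sqrt{x_1x_2}$. This preference is continuous, strictly monotone, normalized, and exactly quasi-concave, so it satisfies $\ep$-uncertainty aversion for every $\ep>0$. But $u$ vanishes on the boundary of $\Re^2_+$ and $(c,c)=\tfrac{1}{2}(2c,0)+\tfrac{1}{2}(0,2c)$, so your $v$ is identically $0$, while $u(c\one)=c$; hence $\norm{u-v}_{\infty}=\infty$, not $\leq d\ep$.

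The repair is the paper's construction: convexify each upper contour set, not the function. Let $\bar U(c)$ be the convex hull of $U(c)$ and set $v(x)=\sup\{c \mid x\in \bar U(c)\}$. This $v$ is quasi-concave and satisfies $v\geq u$, so the error now runs in the direction the axiom can control: writing $x\in\bar U(v(x))$ as a combination of at most $d$ points of $U(v(x))$ (a Carath\'eodory-type theorem for connected sets, which the paper cites, giving $d$ rather than $d+1$) and iterating the axiom---here only the min-based guarantee is needed, since all the points lie in $U(v(x))$---gives $x\in U(v(x)-d\ep)$, i.e.\ $u(x)\geq v(x)-d\ep$. Note also that the theorem's ``quasi-convex'' must be read as quasi-concave (as the paper's own proof and its reference to Starr's convexification make clear): in the example above no quasi-convex function can be uniformly close to $u$, since a quasi-convex function attains its maximum over the segment from $(2c,0)$ to $(0,2c)$ at an endpoint, forcing its value at $(c,c)$ to stay bounded as $c$ grows while $u(c,c)=c$. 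Your hedge toward literal quasi-convexity is part of what steered you to the wrong envelope.
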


To prove the theorem, $v$ is obtained using the convex hull of the preferences' upper contour sets. This idea was first used by \cite{starr1969quasi}. The bound in Theorem~\ref{thm:quasiquasicon} follows simply from a version of Carathéodory's theorem.

\subsection*{Application: Maxmin expected utility and smooth ambiguity aversion}

To clarify the message in Theorem~\ref{thm:AA}, we consider two popular models of choice under uncertainty, the  maxmin expected utility (MEU, \cite{gilboa1989maxmin}) and the smooth ambiguity model of \cite{klibanoff2005}. The MEU model is homothetic, which, given Theorem~\ref{thm:AA}, means that if the summability hypothesis of the theorem were satisfied, it would admit a subjective expected utility representation. It should come as no surprise that this hypothesis is not satisfied; however, we flesh this out here. For the smooth ambiguity model, we work out some examples where the hypothesis of the theorem is actually satisfied. In particular, these examples show that the violations of independence that induce substantial deviations from expected utility involve large-stakes acts.

Suppose then that $\succeq$ is represented by the utility function $u(x) = \min \{\pi \cdot x \mid \pi \in \Pi\}$, where $\Pi\subseteq \Delta$ is a non-degenerate set of priors.
The $\phi$-approximate independence axiom then requires that:
\[
    \frac{1}{2}u(x) + \frac{1}{2}u(y) + \phi(x,y) \geq u\left(\frac{x+y}{2}\right).
\]
Note that the other inequality in the axiom holds as long as $\phi\geq 0$ because $u$ is concave.

Solving for $\phi$, we obtain that 
\[
    \phi(x,y) \geq u\left(\frac{x+y}{2}\right) - \frac{1}{2}u(x) - \frac{1}{2}u(y).
\]
Since $u$ is homogeneous of degree 1, $u(\frac{x+y}{2}) = \frac{1}{2}u(x+y)$. Thus:
\[
    \phi(x,y) \geq \frac{1}{2} [ u(x+y) - u(x) - u(y) ].
\]
If we scale the acts by $\alpha > 0$:
\[
    \phi(\alpha x, \alpha y) \approx \frac{1}{2} [ u(\alpha(x+y)) - u(\alpha x) - u(\alpha y) ]
    = \frac{\alpha}{2} [ u(x+y) - u(x) - u(y) ] = \alpha \phi(x,y).
\]
Since $\phi(\alpha x, \alpha y)$ is linear in $\alpha$, it is not $o(\alpha)$, and the series $\sum 2^{-n}\phi(2^n x, 2^n y)$ diverges. The hypotheses of Theorem~\ref{thm:AA} are not satisfied

Unlike in the case of MEU, one can identify conditions on the smooth ambiguity model of \cite{klibanoff2005} under which it can be approximated by a subjective expected utility representation. To this purpose, consider a utility function of the smooth ambiguity type:
    \[
        u(x) = \int_{\Delta} f\left( \sum_{i=1}^d x_i p_i \right) \diff \mu(p),
    \]
    where $\mu$ is a subjective prior over the set of probability measures $\Delta$, and $f: \Re \to \Re$ is a function that captures departures from subjective expected utility. Note that when $f$ is affine, we obtain a subjective expected utility representation. When $f$ is concave, the representation assumes ambiguity aversion. When $f$ is convex, it assumes ambiguity loving.
    
A sufficient condition for $\phi$-approximate independence is that $\phi$ bounds the $\mu$-weighted average of the degree to which $f$ is not linear. Specifically:
\begin{align*}
    \phi(x,y) & \geq u\left(\frac{x+y}{2}\right) - \frac{1}{2}u(x) - \frac{1}{2}u(y) \\
    & =\int_{\Delta} \left[ f\left( p \cdot \frac{x+y}{2} \right) - \frac{1}{2}f(p \cdot x) - \frac{1}{2}f(p \cdot y) \right] \diff \mu(p).
\end{align*}
    To satisfy the convergence hypothesis of Theorem~\ref{thm:AA}, this cumulative error must grow sub-linearly. 

In the following result (a corollary of Theorem~\ref{thm:AA}), we examine two particular cases of the function $f$. These cases possess two appealing features: one function is concave, and the other is convex; yet both become linear for sufficiently large acts. This helps clarify our point that the only violations of independence that are important are those arising from large acts.

\begin{corollary}\label{cor:smoothamb} Let the function $f$ be either $f(z)=\sqrt{1+z^2}$ or $f(z) = z - e^{-z}$ for $z \geq 0$. Define the mean prior $\bar{p} = \int p \, d\mu$. Let $v$ be the subjective expected utility function with prior equal to $\bar p$:
    \[
        v(x) = \sum_{i=1}^d x_i \bar{p}_i.
    \]
Then, 
    \[
       \norm{u-v}_{\infty} \leq 1
    \]
\end{corollary}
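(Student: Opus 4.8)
The plan is to bound $u-v$ directly and pointwise, exploiting the explicit form of the benchmark $v$ so that no appeal to the abstract limit construction behind Theorem~\ref{thm:AA} is needed. The starting observation is that the mean prior integrates through the inner product: since $\bar p=\int_\Delta p\diff\mu(p)$, linearity of the integral gives $v(x)=\bar p\cdot x=\int_\Delta (p\cdot x)\diff\mu(p)$. Subtracting this from $u(x)=\int_\Delta f(p\cdot x)\diff\mu(p)$, I would write
\[
u(x)-v(x)=\int_\Delta \bigl[f(p\cdot x)-p\cdot x\bigr]\diff\mu(p)=\int_\Delta g(p\cdot x)\diff\mu(p),
\]
where $g(z):=f(z)-z$ records the failure of $f$ to be the identity. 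The entire problem thus collapses to controlling the scalar function $g$.

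Because $\mu$ is a probability measure, $\abs{u(x)-v(x)}\leq\int_\Delta \abs{g(p\cdot x)}\diff\mu(p)\leq\sup_{z\geq 0}\abs{g(z)}$, and since $x\in\Re^d_+$ and $p\in\Delta$ we always have $p\cdot x\geq 0$, so only the behavior of $g$ on $[0,\infty)$ is relevant. The crux is therefore the elementary claim that $\sup_{z\geq 0}\abs{g(z)}=1$ in each case. For $f(z)=\sqrt{1+z^2}$ one has $g(z)=\sqrt{1+z^2}-z=(\sqrt{1+z^2}+z)^{-1}$, which is strictly positive, strictly decreasing, and equal to $1$ at $z=0$; for $f(z)=z-e^{-z}$ one has $g(z)=-e^{-z}$, so $\abs{g(z)}=e^{-z}$ decreases from $1$ to $0$. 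In both cases the supremum is attained at $z=0$ and equals $1$, which yields $\norm{u-v}_\infty\leq 1$ at once, with the bound attained at $x=0$.

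To present the statement as a genuine corollary of Theorem~\ref{thm:AA} rather than a standalone computation, I would additionally remark that this particular $v$ is exactly the linear approximant that theorem produces: because $g$ vanishes at infinity, $2^{-n}u(2^n x)=\int_\Delta\bigl[p\cdot x+2^{-n}g(2^n\,p\cdot x)\bigr]\diff\mu(p)\to\bar p\cdot x=v(x)$, matching the construction $v(x)=\lim_n 2^{-n}u(2^n x)$ in the remark following the theorem. The same vanishing-at-infinity property is precisely what makes the $\phi$-approximate independence error summable in the sense required by Theorem~\ref{thm:AA}, confirming that the abstract result applies; but the explicit constant $1$ is cleanest to read off from the uniform bound on $g$. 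The only subtlety---hardly an obstacle---is to keep track of the domain: the acts are nonnegative, so $p\cdot x$ never leaves $[0,\infty)$, and it is exactly on this ray that each $g$ is bounded by $1$ (the convex case would blow up as $z\to-\infty$, where $\sqrt{1+z^2}-z\to+\infty$, so the restriction $z\geq 0$ is essential). Pinning the supremum at $z=0$ reduces in each case to a one-line monotonicity check.
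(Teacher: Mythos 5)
Your proposal is correct and matches the paper's own argument in its essentials: the paper likewise reduces everything to the pointwise scalar bound $\abs{f(z)-z}\leq 1$ for $z\geq 0$ (treating $\sqrt{1+z^2}-z$ and $e^{-z}$ separately, each decreasing from $1$ at $z=0$) and integrates against $\mu$. The only cosmetic difference is ordering---the paper first re-derives $v(x)=\sum_i x_i\bar p_i$ as the limit $\lim_n 2^{-n}u(2^n x)$ via dominated convergence and then bounds, whereas you bound directly from the explicit $v$ and relegate the limit identification to a remark---which changes nothing of substance.
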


The point of the corollary is to illustrate the role of deviations from ambiguity neutrality for large acts. The two cases covered by the theorem involve representations that become asymptotically ambiguity neutral. The first example, $f(z)=\sqrt{1+z^2}$, is strictly convex and, therefore, features a decision maker who loves ambiguity. The second example, $f(z)=z-e^z$, is strictly concave; thus, it assumes ambiguity aversion. In both cases we have that $\lim_{z\to\infty}f(z)/z = 1$. Hence, for large acts, the two representations approach ambiguity neutrality. 

As a consequence, the corollary shows that we obtain a uniform bound for the difference between the smooth ambiguity representation and a ``natural'' subjective expected utility analogue: the expected utility representation that assumes a mean prior. 

\section{Dated rewards}\label{sec:dated}

We now turn to intertemporal choice. We use the model of dated rewards introduced by \cite{lancaster63} and \cite{fishrubin89} (see \cite{rubinstein2003economics}, \cite{OK2007214} and \cite{dejarnette2020time} for more recent applications). This model fits the experimental designs used in the empirical literature: see, for example, \cite{THALER1981201} for one of the first papers testing exponential discounting, and \cite{shane2002} for a review of the literature.

The primitives are a set $M\subseteq\Re$ of \df{monetary rewards}, and a set $T\subseteq\Re_+$ of possible \df{dates}. A \df{dated reward} is a pair $(X,t)\in M\times T$. Given is a primitive preference relation $\succeq$ on $M\times T$. 

\subsection{Discrete time}\label{sec:disctime}
Suppose that $T=\{0,1,2,\ldots\}$ and that $M=\Re_+$.

We restrict attention to those preference relations that have an associated \df{discount factor} $d:T\to \Re_{++}$ whereby $(X,t)\succeq (Y,s)$ iff $X\cdot d(t)\geq Y\cdot d(s)$. Such preferences have a \df{discount factor representation}. Without loss, we may normalize the discount factors so that $d(0)=1$. We collect preferences with a discount factor representation in a class of preferences that we denote by $\mathcal{D}$.

The main axiom of interest is the following: A preference relation
$\succeq$ satisfies $\delta$-stationarity, for $\da:M\times T\times T\to\Re_+$ if, whenever 
\[
(X,t)\sim (Y,s+t),
\]
then 
\[
(X,0)\sim (Y\da_X(s,t),s)
\]

The usual stationarity axiom (\cite{koopmans1960stationary}) obtains when $\da_X(s,t)=1$ for all $X$, $t$, and $s$. The axiom in that case says that, if we interpret the time $s$ as a \textit{delay}, then the effect of a delay on the comparison of two dated rewards should be independent of the starting point $t$ for the comparison. The stationarity axiom implies that a discount factor must take the exponential form that is standard in economics.

In general, $\da_X$ records the extent to which stationarity is violated when we anchor comparisons by the monetary quantity $x$ and vary the time $t$ and the delay $s$. 

It turns out to be useful to linearize the model by taking logarithms. Define then $\psi_x(s,t)=|\log \da_x(s,t)|$. The difference between the discount factor $d(t)$ and exponential discounting $\gamma^t$ is  evaluated according to the size of $\log(d(t)/\gamma^t)$.

\begin{theorem}\label{thm:exp}
Let $\succeq\in \mathcal{D}$ with the associated discount factor $d$.
    Suppose that, for some $x\in M$, $\sum_{i=0}^{\infty} 2^{-(i+1)}\psi_x(2^i t,2^i s)\eqqcolon\Theta$ is a convergent series. 
    Then there exists a $\gamma>0$ with 
    $$\sup\{\abs{\log \frac{d(t)}{\gamma^t}} :t\in T\} \leq \Theta. $$
\end{theorem}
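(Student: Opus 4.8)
The plan is to recognize that, for preferences in $\mathcal{D}$, the $\da$-stationarity axiom is nothing but a stability statement for the Cauchy (additive) functional equation applied to $g:=\log d$. First I would unpack the axiom through the discount-factor representation. If $(X,t)\sim(Y,s+t)$ then $X\,d(t)=Y\,d(s+t)$, while the conclusion $(X,0)\sim(Y\da_X(s,t),s)$ together with $d(0)=1$ forces $X=Y\,\da_X(s,t)\,d(s)$. Since $M=\Re_+$, suitable positive rewards realizing both indifferences exist, and eliminating $X,Y$ yields the identity
\[
\da_X(s,t)=\frac{d(s+t)}{d(s)\,d(t)},
\]
which is independent of the anchor $x$ (the quantifier ``for some $x$'' merely fixes a reference). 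Taking logarithms and writing $g=\log d$, this says exactly $\psi_x(s,t)=\abs{g(s+t)-g(s)-g(t)}$, so $g$ solves Cauchy's equation up to the controlled error $\psi_x$.

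Next I would run Hyers' doubling construction. Put $\ep(s,t):=g(s+t)-g(s)-g(t)$, so $\abs{\ep(s,t)}=\psi_x(s,t)$, and for fixed $t\in T$ set $a_n:=g(2^n t)/2^n$. Then $a_{n+1}-a_n=\ep(2^n t,2^n t)/2^{n+1}$, whose absolute value is $\psi_x(2^n t,2^n t)/2^{n+1}$. The hypothesized convergence of $\sum_i 2^{-(i+1)}\psi_x(2^i t,2^i s)$ (read along the diagonal $s=t$, or uniformly in the pair) makes $(a_n)$ Cauchy, so $A(t):=\lim_n g(2^n t)/2^n$ exists, and telescoping the differences gives
\[
\abs{g(t)-A(t)}\le \sum_{i=0}^{\infty}2^{-(i+1)}\psi_x(2^i t,2^i t)\le\Theta.
\]

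I would then show $A$ is additive. Because the summability forces $\psi_x(2^n s,2^n t)/2^n\to 0$, dividing $g(2^n(s+t))=g(2^n s)+g(2^n t)+\ep(2^n s,2^n t)$ by $2^n$ and letting $n\to\infty$ yields $A(s+t)=A(s)+A(t)$. Here the discreteness $T=\{0,1,2,\ldots\}$ is a genuine convenience: additivity on $\N$ gives $A(t)=t\,A(1)$ directly, with no regularity requirement and hence no Hamel-basis pathologies. Setting $\gamma:=\exp(A(1))>0$ gives $A(t)=t\log\gamma=\log\gamma^t$, so combining with the previous display, $\abs{\log(d(t)/\gamma^t)}=\abs{g(t)-\log\gamma^t}\le\Theta$ for every $t$, which is precisely the claimed supremum bound.

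The hard part will be the uniformity. The doubling argument bounds $\abs{g(t)-A(t)}$ by the $t$-dependent series $\sum_i 2^{-(i+1)}\psi_x(2^i t,2^i t)$, and to obtain a \emph{single} constant $\Theta$ controlling the supremum over all $t$ one must read the hypothesis as a uniform summability condition, exactly as the convergence hypothesis functions in Theorem~\ref{thm:AA}. The one other point requiring care is the initial reduction—correctly extracting $\da_X(s,t)=d(s+t)/(d(s)d(t))$ from the axiom and verifying that the anchoring rewards realizing the two indifferences exist—after which the argument is an instance of the classical Hyers–Ulam stability theory cited in the paper.
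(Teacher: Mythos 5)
Your proposal is correct and follows essentially the same route as the paper's (sketched) proof: extract $\da_x(s,t)=d(s+t)/(d(s)d(t))$ from the two indifferences, pass to $g=\log d$, run Hyers' doubling argument to obtain an additive limit $A(t)=\lim_n 2^{-n}g(2^n t)$ with $\abs{g(t)-A(t)}\le\Theta$, and use $T=\N$ to get $A(t)=t\log\gamma$. If anything, you make explicit two points the paper leaves as ``familiar arguments'': that additivity of $A$ needs $2^{-n}\psi_x(2^n s,2^n t)\to 0$ for off-diagonal pairs, and that the hypothesis must be read as a uniform (in $t,s$) summability bound for the supremum claim to follow.
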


Theorem~\ref{thm:exp} connects the degree of violation of the stationarity axiom with the existence of a nearby exponential representation for the discount factor. The magnitude $\Theta$ adds up violations and discounts them by $1/2$. This magnitude (which, in principle, depends on the monetary quantity $x$ chosen as the anchor in $\psi_x$), then provides the desired approximation guarantee. 

In the special case when $\da_X$ is uniformly bounded and presumably small (meaning close to 1), we obtain a sharper result, stated here as:

\begin{corollary}
   Suppose that there is $\ep>0$ and $X\in M$ so that  $\da_X(t,s)\in (\frac{1}{1+\ep},1+\ep)$. Then there exists a $\gamma>0$ with 
    $$\sup\{\abs{\log \frac{d(t)}{\gamma^t}}  \mid t\in T\} \leq \ep $$
\end{corollary}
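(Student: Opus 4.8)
The plan is to obtain the corollary as an immediate specialization of Theorem~\ref{thm:exp}, using the uniform multiplicative bound on $\da_X$ to control the summable quantity $\Theta$. First I would convert the bound $\da_X(s,t)\in(\frac{1}{1+\ep},1+\ep)$ into a uniform additive bound on $\psi_X$. Taking logarithms, $\da_X(s,t)\in(\frac{1}{1+\ep},1+\ep)$ is equivalent to $\log\da_X(s,t)\in(-\log(1+\ep),\log(1+\ep))$, and hence $\psi_X(s,t)=\abs{\log\da_X(s,t)}\leq\log(1+\ep)$ for all $s,t\in T$.

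With this uniform bound in hand, I would estimate the series $\Theta$ appearing in Theorem~\ref{thm:exp}. Since the bound on $\psi_X$ holds at every rescaled pair of times, in particular at $(2^i s,2^i t)$, we get
$$\Theta=\sum_{i=0}^{\infty}2^{-(i+1)}\psi_X(2^i t,2^i s)\leq\log(1+\ep)\sum_{i=0}^{\infty}2^{-(i+1)}=\log(1+\ep),$$
where the last equality uses that the geometric weights $2^{-(i+1)}$ sum to one. In particular the series converges, so the hypothesis of Theorem~\ref{thm:exp} is satisfied, and the theorem delivers some $\gamma>0$ with $\sup\{\abs{\log(d(t)/\gamma^t)}:t\in T\}\leq\Theta\leq\log(1+\ep)$.

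It remains to replace $\log(1+\ep)$ by $\ep$, which follows from the elementary inequality $\log(1+\ep)\leq\ep$ valid for every $\ep>0$. Chaining the inequalities gives $\sup\{\abs{\log(d(t)/\gamma^t)}:t\in T\}\leq\ep$, as claimed. There is no genuine obstacle here: all the analytic work is done inside Theorem~\ref{thm:exp}, and the only two points that require a moment's attention are recognizing that the discount weights $2^{-(i+1)}$ sum to one---so that a uniform per-term bound on $\psi_X$ passes through to the same bound on $\Theta$---and invoking $\log(1+\ep)\leq\ep$ to arrive at the clean constant stated in the corollary.
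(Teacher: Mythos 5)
Your proof is correct and matches the paper's intended derivation: the paper states this corollary as an immediate specialization of Theorem~\ref{thm:exp} (with no separate proof given), relying exactly on the uniform bound $\psi_X\leq\log(1+\ep)$, the fact that the weights $2^{-(i+1)}$ sum to one, and the inequality $\log(1+\ep)\leq\ep$. Nothing is missing; the hypothesis also trivially guarantees the convergence (and vanishing-tail) conditions needed in the theorem's proof.
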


\subsection{Strictly decreasing discount factors}

Let $\bar{\mathcal{D}}$ comprise the preferences $\succeq\in D$ with a discount factor that is strictly monotonically decreasing and vanishes for large enough $t$. Here we allow for negative monetary payments, and thus assume that $M=\Re$.

For the preference in $\bar{\mathcal{D}}$, we entertain a different relaxation of the stationarity axiom. It measures deviations from stationarity relative to the delays involved. 

A preference relation $\succeq$ satisfies the $(X,\Theta)$-stationarity axiom if, for $Y,Z$ and $ W$ with
\[(X,t)\sim (Y,t+s),\]
\[
(X,0)\sim (Z,s),
\] and
\[
(Z/Y-1,0)\sim (W,s+t).
\] It holds that $\abs{W}\leq \Theta$.

Why does the axiom make sense? Note that the conventional stationarity axiom requires that $Z/Y=1$ because the comparison of $X$ and $Y$ at the delay $s$ should not depend on the starting point for the comparison. Here we relax the stationarity axiom by allowing for $Z/Y\neq 1$. The relaxation is relative to $t+s$, so that larger deviations from $Y\neq Z$ are allowed when the time periods involved are closer in the future.

\begin{theorem}\label{thm:exp2}
    If $\succeq\in \bar{\mathcal{D}}$ satisfies the $(X,\Theta)$ stationarity axiom for some $X>0$ and $\Theta\geq 0$, then there is $\gamma\in (0,1)$ such that $d(t)=\gamma^t$.
\end{theorem}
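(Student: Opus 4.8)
The plan is to translate the $(X,\Theta)$-stationarity axiom into a stability condition for the multiplicative Cauchy equation satisfied by the reciprocal of the discount factor, and then to exploit the fact that the discount factor vanishes at infinity to upgrade this approximate condition into an exact functional equation.

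First I would rewrite the three indifferences using the representation $(A,t)\sim(B,s)\iff A\,d(t)=B\,d(s)$. The first gives $Y=X\,d(t)/d(t+s)$, the second gives $Z=X/d(s)$, and the third gives $W=(Z/Y-1)/d(s+t)$. Since $M=\Re$, all of $Y,Z,W$ are well-defined reals for every $s,t$, so the axiom applies for all relevant $s,t\in T$. Substituting, $Z/Y=d(t+s)/(d(s)d(t))$, and writing $F\coloneqq 1/d$ I expect the algebra to collapse to the clean identity $W=F(s)F(t)-F(s+t)$. The bound $|W|\le\Theta$ then reads
\[
|F(s+t)-F(s)F(t)|\le\Theta\qquad\text{for all }s,t\in T .
\]

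Next I would record the properties of $F$ inherited from $\succeq\in\bar{\mathcal D}$: since $d(0)=1$ we have $F(0)=1$; since $d$ is strictly decreasing, $F$ is strictly increasing; and since $d$ vanishes for large $t$, $F(t)\to\infty$. This last property is the engine of the argument. Dividing the displayed inequality by $F(s)$ and letting $s\to\infty$ gives $|F(s+t)/F(s)-F(t)|\le\Theta/F(s)\to 0$, so for each fixed $t$,
\[
\lim_{s\to\infty}\frac{F(s+t)}{F(s)}=F(t).
\]
The key step is then to show that this limiting ratio is \emph{exactly} multiplicative, and hence that $F$ itself is. Using the factorization $\tfrac{F(s+t_1+t_2)}{F(s)}=\tfrac{F((s+t_2)+t_1)}{F(s+t_2)}\cdot\tfrac{F(s+t_2)}{F(s)}$ and sending $s\to\infty$ (so that $s+t_2\to\infty$ as well), the first factor tends to $F(t_1)$ and the second to $F(t_2)$, while the left-hand side tends to $F(t_1+t_2)$. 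Thus $F(t_1+t_2)=F(t_1)F(t_2)$ with no error term. Finally, a strictly monotone multiplicative function satisfies $F(t)=\beta^{t}$ for some $\beta>1$ (Cauchy's equation in the continuous case, and immediately from $F(n)=F(1)^n$ in the discrete case), so $d(t)=\gamma^{t}$ with $\gamma=1/\beta\in(0,1)$, as claimed.

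I expect the main obstacle to be conceptual rather than computational: recognizing that the approximate bound, once divided by the diverging quantity $F(s)$, yields an \emph{exact} multiplicative functional equation in the limit, and that the limiting object coincides with $F$ itself. The role of $\Theta$ is merely to be finite; any fixed tolerance washes out against $F(s)\to\infty$, which is precisely why restricting to $\bar{\mathcal D}$ (strictly decreasing, vanishing discount factors) delivers an exact geometric discounting representation rather than a merely approximate one.
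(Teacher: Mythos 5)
Your proposal is correct, and its first half is exactly the paper's reduction: from the three indifferences and the discount-factor representation one gets $Z/Y=d(t+s)/(d(s)d(t))$ and hence $\abs{F(s+t)-F(s)F(t)}=\abs{W}\leq\Theta$ for $F=1/d$, with $F$ unbounded because $d$ vanishes at infinity. Where you diverge is in how this approximate multiplicative equation is upgraded to an exact one: the paper simply cites Theorem~2 of Baker (1979) --- the superstability of the exponential Cauchy equation, which says that any $F$ satisfying $\abs{F(s+t)-F(s)F(t)}\leq\Theta$ is either globally bounded or exactly multiplicative --- together with a remark that Baker's result, though stated for functions on the integers, applies here. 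You instead prove the needed implication from scratch: dividing by $F(s)$ and letting $s\to\infty$ gives $F(s+t)/F(s)\to F(t)$, and the factorization of $F(s+t_1+t_2)/F(s)$ through $F(s+t_2)$ shows the limit is exactly multiplicative. This is in effect the standard proof of Baker's theorem specialized to your setting, and it is sound; note that it uses the slightly stronger fact $F(s)\to\infty$ as $s\to\infty$ (available here from monotonicity and $d(t)\to 0$) rather than mere unboundedness along some sequence, which is all Baker requires. What your route buys is self-containedness and uniform treatment of discrete and continuous time, making transparent why the size of $\Theta$ is irrelevant; what the citation buys the paper is brevity, at the cost of the caveat about the domain of Baker's statement. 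Both routes finish identically: a positive, strictly monotone, multiplicative $F$ with $F(0)=1$ must be $\beta^t$ with $\beta=F(1)>1$, so $d(t)=\gamma^t$ with $\gamma=1/\beta\in(0,1)$.
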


Theorem~\ref{thm:exp2} is a case in which the relaxation of the axiom results in a representation that fits the exact, non-relaxed, axiom. By relaxing stationarity in this setting, we still obtain a utility representation by means of an exponential discount factor.

\subsection{Continuous time}\label{sec:conttime}

We now turn to continuous time, and obtain a rather different kind of result. In particular, let $M = (0,\bar X)$ and $T=\Re_+$. The existence of an upper bound on monetary rewards is technically convenient. We disregard the special zero payments, as the discounted value of zero is uninteresting.

A preference $\succeq$ is strictly monotonic if, whenever $(X,t)$ and $(X',t')$ are such that $X\geq X'$ and $t\leq t'$, with at least one strict inequality, then $(X,t)$ is strictly preferred to $(X',t')$. 

Strictly monotonic preferences satisfy two natural properties in economics. First, that money is always a desirable good. Second, that delay is bad. Meaning that receiving any given monetary quantity earlier is strictly better than receiving it later. 

For convenience, we work with $x=\log X$. Then the relevant choice space becomes $(-\infty, \bar x]\times [0,+\infty)$, where $\bar x=\log \bar X$.

We entertain the following collection of axioms, where $\ep>0$:
\begin{enumerate}
\item $\succeq$ is continuous and strictly monotonic.
\item \label{ax:npv} For each $(x,t)$ there exists $x'$ with $(x,t)\succeq (x',0)$.
\item \label{ax:deprec} 
For any $x$, there is $\bar{t}$ for which $(x,0)\succ (\bar x,t)$ for any $t\geq \bar{t}$.
\item \label{ax:statep}  $\ep$-Stationarity: \\
$(\bar x,t)\sim (x,0)$ implies that, for any $\Delta>0$, $(\bar x,t+\Delta')\sim (x,\Delta)$ for some $\Delta'$ with $\abs{\Delta-\Delta'}\leq \ep$.
\item \label{ax:lipsch} $\la$-Lipschitz: \\
$(x,t)\sim (y,0)$ implies that, for any $\Delta$, $(x,t+\Delta)\sim (y + \Delta',0)$ for some $\Delta'$ with $\abs{\Delta'}\leq \la \Delta$.
\end{enumerate}

The interpretation of these axioms is, we think, straightforward. The first axiom embodies continuity, that money is desirable, and that delays are  undesirable. The second axiom means that any delayed payment is preferred to some smaller payment in the present (no delay is so bad that its present value is $-\infty$). The third says that delays can be bad enough to ``discount'' the maximum payment to any desirable level of present consumption. The final axiom establishes a bound on marginal rates of substitution --- this is our most questionable axiom, but it should be more palatable in the present setting with the $\log X$ transformation of money.

Finally, the main axiom that we want to draw attention to here is~\ref{ax:statep}. It is a relaxed version of the familiar stationarity axiom that ensures dynamically consistent decision-making.

\begin{theorem}\label{thm:exp3}
If $\succeq$ satisfies the axioms, then it admits a continuous utility representation $u$ for which there exists a continuous and strictly monotonic function $\gamma:(-\infty,\bar x)\to \Re_+$ with
\[\sup\left\{\abs{u(x,t) - \gamma^{-1}(\gamma(x)+t)}: (x,t)\in (-\infty,\bar x)\times\Re_+ \right\} \leq\la \ep.
\]
Moreover, $u$ and $\gamma$ satisfy $\gamma(\bar x)=0$, $u(\bar x,0)=\bar x$ and $$\sup\{\gamma(x):x\in (-\infty,\bar x) \}=-\inf\{u(\bar x,t):t\in\Re_+ \}=-\infty.$$
\end{theorem}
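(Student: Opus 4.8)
The plan is to extract from $\succeq$ two objects: a genuine representation $u$ that records behavior exactly, and the idealized stationary benchmark $v(x,t)=\gamma^{-1}(\gamma(x)+t)$, and then to show they differ by at most $\la\ep$ through a \emph{single} application of $\ep$-stationarity. First I would invoke Debreu's theorem for the continuous representation, and then fix the normalization that makes the statement meaningful: let $u(x,t)$ be the \emph{present log-money equivalent}, i.e. the unique $\xi$ with $(x,t)\sim(\xi,0)$. Existence of this equivalent for every $(x,t)$ comes from axiom~\ref{ax:npv} (which bounds the present value below) together with monotonicity (which bounds it above by $x\le\bar x$), plus the intermediate value theorem along the continuous, strictly monotone path $\xi\mapsto(\xi,0)$; uniqueness and the representation property follow from strict monotonicity, and $u(x,0)=x$, so $u(\bar x,0)=\bar x$. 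Next I would define $\gamma$: for $x<\bar x$, strict monotonicity gives $(\bar x,0)\succ(x,0)$ while axiom~\ref{ax:deprec} yields a delay with $(\bar x,t)\prec(x,0)$, so the intermediate value theorem produces a unique $\gamma(x)\ge 0$ with $(\bar x,\gamma(x))\sim(x,0)$; continuity of $\succeq$ makes $\gamma$ continuous, strict monotonicity makes it strictly decreasing, and $\gamma(\bar x)=0$. By construction $u(\bar x,\tau)=\gamma^{-1}(\tau)$, so the benchmark is exactly $v(x,t)=u(\bar x,\gamma(x)+t)$.

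The two behavioral axioms then drive the estimate. Applying the $\la$-Lipschitz axiom~\ref{ax:lipsch} to the reference indifference $(\bar x,\tau)\sim(\gamma^{-1}(\tau),0)$ with a time increment $\delta$ shows $\gamma^{-1}(\tau+\delta)=\gamma^{-1}(\tau)+\Delta'$ with $|\Delta'|\le\la|\delta|$, so $\gamma^{-1}$ is $\la$-Lipschitz. Separately, starting from $(\bar x,\gamma(x))\sim(x,0)$ and applying $\ep$-stationarity (axiom~\ref{ax:statep}) with delay $\Delta=t$ produces a single $\Delta'$ with $|\Delta'-t|\le\ep$ and $(\bar x,\gamma(x)+\Delta')\sim(x,t)$. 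Taking present equivalents gives $u(x,t)=u(\bar x,\gamma(x)+\Delta')=\gamma^{-1}(\gamma(x)+\Delta')$, whence
\[
|u(x,t)-\gamma^{-1}(\gamma(x)+t)|=|\gamma^{-1}(\gamma(x)+\Delta')-\gamma^{-1}(\gamma(x)+t)|\le\la|\Delta'-t|\le\la\ep ,
\]
the claimed uniform bound. The crucial feature is that $\ep$-stationarity bounds the \emph{total} time-discrepancy by $\ep$ for any delay $\Delta$, so no iteration or summation is needed; this is why the continuous-time argument is one-shot, in contrast with Theorem~\ref{thm:exp}.

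For the boundary conditions, $\gamma(\bar x)=0$ and $u(\bar x,0)=\bar x$ are immediate, while the limiting behavior follows from axiom~\ref{ax:deprec}: for any target $x$ that axiom supplies a delay $\tau$ with $(\bar x,\tau)\prec(x,0)$, forcing $\gamma^{-1}(\tau)<x$, so letting $x\to-\infty$ shows $\gamma^{-1}(\tau)\to-\infty$ as $\tau\to\infty$ (equivalently $u(\bar x,t)\to-\infty$ and $\gamma$ is unbounded). I expect the main obstacle to be the setup rather than the estimate: verifying that the present-equivalent $u$ is everywhere defined and continuous, that $\gamma$ is a continuous strictly monotone bijection onto $\Re_+$ (so that $v$ and the Lipschitz step are valid on the whole domain), and handling the two-sided nature of the Lipschitz bound, since the perturbed time $\gamma(x)+\Delta'$ may lie on either side of $\gamma(x)+t$. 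Once these regularity facts are secured, the approximation is exactly the short display above.
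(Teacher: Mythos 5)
Your proposal is correct and follows essentially the same route as the paper: the same present-equivalent utility $u$, the same $\gamma$ elicited by $(\bar x,\gamma(x))\sim(x,0)$ (the paper defines it as the inverse of $g(t)=u(\bar x,t)$, which is identical), and the same one-shot combination of $\ep$-stationarity with the $\la$-Lipschitz axiom to turn a time error of at most $\ep$ into a money error of at most $\la\ep$. The only cosmetic difference is that you first package the Lipschitz axiom as $\la$-Lipschitz continuity of $\gamma^{-1}$ and then apply it (which also neatly resolves the two-sided-increment worry you raise), whereas the paper applies the axiom directly to the relevant indifference.
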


\begin{remark}
    The function $\gamma$ is directly elicited from $\succeq$ by $(\bar x,\gamma(x))\sim (x,0)$, so it is easy to provide conditions under which it is linear: $\gamma(x) = b(\bar x - x)$. This gives an approximate representation that is ordinally equivalent to $v(X,t)=Xe^{-t/b}$, where $X$ is the original consumption of money, before the logarithmic transformation. So we obtain the standard model of exponential discounting, with discount factor $e^{-1/b}$.
\end{remark}

\begin{remark}
    When $\ep$-stationarity holds with $\ep=0$, the theorem delivers an exact representation in the form $\gamma^{-1}(t+\gamma(x))$. In this case, the $\la$-Lipschitz axiom can be dispensed with. 
\end{remark}

\section{Conclusion}
Decision theory papers are usually about a representation theorem: offering behavioral axioms that characterize the optimization of a particular kind of utility function. We suggest that an approximate version of this exercise can be interesting and potentially useful. The relaxation, or approximate satisfaction, of the axioms can guarantee the approximate optimization of a utility function that satisfies the exact version of the axiom. 

The results in the paper are relatively simple. Many of them are obtained by translating known results on the stability of functional equations to an economic model.  Given the close relationship between the stability of functional equations and our applications, we think that there are likely many other questions in decision theory that can be analyzed in a similar way, and for which approximate results such as ours are viable. One could arguably ask about approximate versions of most research in decision theory. Having a better understanding of the scope of these approximation results, and obtaining better quantitative guarantees, can be important empirically to assess the severity of any observed violations of the theory.

%%%%
\section{Proofs}\label{sec:proofs}

\subsection{Proof of Theorem~\ref{thm:EU}}
 Suppose without loss of generality that the worst element $\ul p$ can be taken to be the degenerate lottery on prize $d+1$. Suppose also that the best element $\bar p$ is the degenerate lottery on the first prize. Let $\Delta_{-}=\{x\in\Re^d_+ \mid \sum_{i=1}^d x_i \leq 1 \}$.

First, we may identify the elements of $\Delta$ and of $\Delta_{-}$ by mapping each $x\in \Delta$ onto $(x_1,\ldots,x_d)\in\Delta_{-}$. In consequence, we regard $\succeq$ as a preference on $\Delta_{-}$. Note then that the origin $0$ corresponds to the degenerate lottery on the worst prize. The degenerate lottery on the best prize is $e_1$, the first unit base vector. The rest of the unit base vectors are denoted $e_i$, $i=2,\ldots,d$.

By continuity and monotonicity, for each $x\in\Delta_{-}$ there exists a unique $\al\in[0,1]$ with $x\sim \al e_1+(1-\al) 0$.  Define $u:\Delta_{-}\rightarrow\Re$ by $u(x)=\al$.

\begin{lemma}\label{lem:approxCauchy}Let $x_1,\ldots, x_k\in\Delta_{-}$ and $\la_1,\ldots,\la_k\geq 0$ with $\sum_{i=1}^k \la_i = 1$. Then
    \[
    \abs{u(\sum_{j=1}^k \la_j x_j) - \sum_{j=1}^k \la_j u(x_j)}<(k-1)\cdot \ep
    \]
\end{lemma}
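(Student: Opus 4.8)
The plan is to prove the bound by induction on $k$, using the $\ep$-reduction axiom to handle the base case $k=2$ and then bootstrapping to general convex combinations. The function $u$ is defined so that $u(x)=\al$ exactly when $x\sim \al e_1 + (1-\al)0$; in the notation of the axiom, this means $u(x)$ is precisely the unique mixing weight $\al$ with $x\sim \al\bar p + (1-\al)\ul p$. So $u$ is the map sending each lottery to its certainty-equivalent weight on the best prize, and the statement asserts that $u$ is ``almost affine'' along convex combinations, with error accumulating linearly in the number of terms.

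For the base case $k=2$, I would take $x_1,x_2\in\Delta_-$ and $\la\in[0,1]$, and set $p_0=x_1$, $p_1=x_2$. By definition of $u$ we have $p_0\sim\al_0\bar p+(1-\al_0)\ul p$ and $p_1\sim\al_1\bar p+(1-\al_1)\ul p$ with $\al_0=u(x_1)$ and $\al_1=u(x_2)$. Applying $u$ to the mixture, there is $\al=u(\la x_1+(1-\la)x_2)$ with $\la x_1+(1-\la)x_2\sim\al\bar p+(1-\al)\ul p$. These three indifferences are exactly the hypotheses of the $\ep$-reduction of compound lotteries axiom, so the axiom yields
\[
\abs{u(\la x_1+(1-\la)x_2)-\bigl(\la u(x_1)+(1-\la)u(x_2)\bigr)}<\ep,
\]
which is the claim for $k=2$ (error $(k-1)\ep=\ep$).

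For the inductive step, suppose the bound holds for $k-1$ terms and let $\sum_{j=1}^k\la_j x_j$ be given with $\la_k<1$ (the case $\la_k=1$ being trivial). I would write the $k$-fold combination as a two-fold mixture $\la_k x_k+(1-\la_k)y$, where $y=\sum_{j=1}^{k-1}\frac{\la_j}{1-\la_k}x_j$ is itself a convex combination of $k-1$ points of $\Delta_-$ (note $\Delta_-$ is convex, so $y\in\Delta_-$ and $u(y)$ is defined). Applying the $k=2$ estimate to $x_k$ and $y$ gives an error below $\ep$ in replacing $u$ of the mixture by $\la_k u(x_k)+(1-\la_k)u(y)$; applying the inductive hypothesis to $y$ gives $\abs{u(y)-\sum_{j=1}^{k-1}\frac{\la_j}{1-\la_k}u(x_j)}<(k-2)\ep$. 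Multiplying the latter by $(1-\la_k)\le 1$ and combining via the triangle inequality yields total error below $\ep+(1-\la_k)(k-2)\ep\le\ep+(k-2)\ep=(k-1)\ep$, as desired.

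The only point requiring care is the bookkeeping in the inductive step: one must verify that $y$ genuinely lies in $\Delta_-$ (guaranteed by convexity of $\Delta_-$) so that $u(y)$ is well-defined, and that the weighting factor $(1-\la_k)$ multiplying the inductive error does not inflate the bound — which it does not, since $(1-\la_k)\le 1$. I do not expect a serious obstacle here; the linear accumulation of error is exactly what the telescoping two-at-a-time decomposition produces, and the strictness of the inequality is preserved because each application of the axiom is strict. The main conceptual step is simply recognizing that the $\ep$-reduction axiom, phrased in terms of indifferences with the extreme lotteries $\bar p,\ul p$, is precisely an approximate Cauchy/Jensen condition on the representation $u$.
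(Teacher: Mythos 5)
Your proof is correct and follows essentially the same route as the paper: the base case $k=2$ is an immediate application of the $\ep$-reduction axiom to the three defining indifferences, and the inductive step uses the identical two-term decomposition (you split off $x_k$, the paper splits off $x_1$) with the triangle inequality and the factor $(1-\la_k)\le 1$ giving the bound $\ep+(k-2)\ep$. There is no substantive difference between your argument and the paper's.
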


To prove the lemma, we proceed by induction. First we consider the case when $k=2$. Let $p=x_1$, $q=x_2$, and $\la_1 = \la$.  
Then by the definition of $u$, 
\begin{align*}
p & \sim u(p)\bar p + (1-u(p))0 \\
q & \sim u(q)\bar p + (1-u(q))0 \\
\la p + (1-\la)q & \sim u(\la p + (1-\la)q) \bar p + (1-u(\la p + (1-\la)q))0.
\end{align*} By $\ep$-Reduction of compound lotteries, then 
\[\abs{u(\la p + (1-\la)q) - \la u(p) - (1-\la)u(q) }<\ep\]

And thus the inequality holds for $k=2$. We now complete the proof by induction. Suppose that $k\geq 3$ and that the inequality holds for $k'=2,\ldots,k-1$. If $\la_1=1$ then there is nothing to prove, so suppose that $\la_1<1$. Then if $q=\sum_{j=2}^k\frac{\la_j}{1-\la_1} x_j$ we have by the inductive hypothesis that  \[
\abs{u(q)-\sum_{j=2}^k\frac{\la_j}{1-\la_1} u(x_j)}<(k-2)\cdot \ep.
\] Hence, 
\begin{align*}
\abs{u(\sum_{j=1}^k\la_j x_j) - \sum_{j=1}^k\la_j u(x_j)} & =     
\left| u(\la_1 x_1+(1-\la_1)q)-\la_1 u(x_1) - (1-\la_1)u(q)\right. \\
& \left.+(1-\la_1)[u(q)-\sum_{j=2}^k\frac{\la_j}{1-\la_1} u(x_j)]\right| \\
& \leq \abs{u(\la_1 x_1+(1-\la_1)q)-\la_1 u(x_1) - (1-\la_1)u(q)} \\
& +(1-\la_1)\abs{u(q)-\sum_{j=2}^k\frac{\la_j}{1-\la_1} u(x_j)} \\
&<\ep + (k-2)\ep,
\end{align*} where the final inequality follows from $\ep$--reduction of compound lotteries and the inductive hypothesis. This establishes Lemma~\ref{lem:approxCauchy}.

Now we can finish the proof. Note that we have defined $u(e_i),$ $i=1,\ldots, d$ as well as $u(0)$. For each $x\in \Delta_{-}$, we have $x=(1-\sum_{i=1}^d x_i) 0 +\sum_{i=1}^d x_i \cdot e_i$ as the unique representation of $x$ as a linear combination of $e_1,\dots,e_d$ and $0$ whose coordinates sum to one. 

Define $\ell(x) = \sum_{i=1}^d x_i u(e_i)$. Note that $\ell:\Delta_{-}\to\Re$ is affine, $u(e_i)=\ell(e_i)$ for all $i$, and $u(0)=\ell(0)$. 

Now consider $x$ which is induced by a nondegenerate lottery.  Using Lemma~\ref{lem:approxCauchy}, then, we obtain that
\begin{align*}
    \abs{u(x)-\ell(x)} & =  \abs{u(\sum_{i=1}^d  x_i \cdot e_i ) - \sum_{i=1}^d  x_i \ell(e_i)} \\
& = \abs{u(\sum_{i=1}^d  x_i\cdot e_i ) - \sum_{i=1}^d x_i u(e_i)} \\
& < (\mbox{supp}(x)-1)\ep,
\end{align*}
where $\supp (x) = |\{i \mid x_i > 0\}|+\mathbf{1}_{\{\sum_{i=1}^dx_i<1\}}$.
Note finally that we may regard $u$ and $\ell$ as functions on $\Delta$ by taking $u(\delta_{d+1})=\ell(\delta_{d+1})=0$. Then $\ell(\al p + (1-\al) q) = \al \ell(p) + (1-\al) \ell(q)$. So $\ell:\Delta\to [0,1]$ is affine.  Finally, no other affine function coincides with $u$ on the degenerate lotteries.

Conversely, suppose that $(u,\ell)$ are as in the statement of the theorem. Suppose that  
$u(p_0) = u(\al_0 \bar p + (1-\al_0)\ul p)$, 
$u(p_1) = u(\al_1 \bar p + (1-\al_1)\ul p)$ and 
\[u(\la p_0 + (1-\la) p_1)  = u(\al \bar p + (1-\al) \ul p).\]
Recall that $\ell(\bar p) = \ell(e_1)=1$ and $\ell (\ul p) = \ell(0)=0$. Then $\ell(\al_i \bar p + (1-\al_i)\ul p) = \ell (\al_i \bar p) = \al_i$.

We also have that 
\[\abs{\ell(p_i)-\al_i}=
\abs{\ell(p_i) - \ell (\al_i \bar p)}\leq 
\abs{\ell(p_i) - u(p_i)}+\abs{u(p_i)-u(\al_i \bar p)}+\abs{u(\al_i \bar p)-\ell(\al_i \bar p)}\leq 2\ep.
\] Thus,
\[\begin{split}
\abs{\ell(\la p_0 + (1-\la) p_1) - (\la \al_0 + (1-\la) \al_1)}
=
\abs{\la [\ell(p_0) - \al_0] + (1-\la) [\ell(p_1) - \al_1]} \\
< 2\ep
\end{split}\] 

Then,
\begin{align*}
\abs{\al - (\la \al_0 + (1-\la) \al_1)}  
& < \abs{\ell(\al \bar p) - \ell ((\la \al_0 + (1-\la) \al_1)\bar p )} + 2\ep\\
& = \abs{\ell(\al \bar p) - u(\al \bar p) + u(\al\bar p) - \ell ((\la \al_0 + (1-\la) \al_1)\bar p )} + 2\ep\\
& \leq  \abs{\ell(\al \bar p) - u(\al \bar p)} \\
& + \abs{u(\al \bar p + (1-\al) \ul p) - \ell ((\la \al_0 + (1-\la) \al_1)\bar p )} + 2\ep\\
& < 4\ep,
\end{align*}
where the last inequality uses that $u(\la p_0 + (1-\la) p_1)  = u(\al \bar p + (1-\la) \ul p)$ and $\norm{u-\ell}_{\infty}<\ep$.

\subsection{Proof of Theorem~\ref{thm:EU2}}

First, let $\bar p, \ul p\in\Delta$ be such that $\bar p\succeq p\succeq \ul p$ for all $p\in \Delta$. By continuity and monotonicity, for any $p\in\Delta$ there is $\al\in [0,1]$ with $p\sim \al \bar p+(1-\al)\ul p$. Define $u(p) = \al$.

    \begin{lemma}\label{lem:approx0} If $p_k\sim q_k$, for $k=1,\ldots,K$, then for any $\la_k\geq 0$ with $\sum_k \la_k = 1$ there exists $\mu_k\geq 0$ with $\sum_k \mu_k = 1$ and $\abs{\la_k-\mu_k} < K\ep$ such that
    \[\sum_{k=1}^K \la_k p_k  \sim  \sum_{k=1}^K \mu_k q_k \]
        \end{lemma}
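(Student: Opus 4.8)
The plan is to argue by induction on the number $K$ of indifferences, using a single application of $\ep$-independence as the atomic move. The base case $K=1$ is immediate: take $\mu_1=1$, so that the required mixture is just the hypothesis $p_1\sim q_1$ and the weight drift is $0<\ep$. The substance is in the inductive step, and the cleanest way to organize it is to reduce the step for $K$ indifferences to the case of \emph{two} indifferences, which is where $\ep$-independence actually enters.

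Concretely, suppose the statement holds for $K-1$ indifferences and fix weights $\la_1,\dots,\la_K$ summing to one; we may assume $\la_1<1$, since otherwise only $p_1$ appears and we are in the base case. I would write $\sum_{k=1}^K\la_k p_k=\la_1 p_1+(1-\la_1)r$, where $r=\sum_{k=2}^K\tilde{\la}_k p_k$ is the normalized tail mixture with $\tilde{\la}_k=\la_k/(1-\la_1)$. Applying the inductive hypothesis to the $K-1$ indifferences $p_2\sim q_2,\dots,p_K\sim q_K$ with the weights $\tilde{\la}_k$ yields tail weights $\nu_k\ge 0$ with $\sum_{k\ge 2}\nu_k=1$ and $|\tilde{\la}_k-\nu_k|<(K-1)\ep$, together with a mixture $s=\sum_{k=2}^K\nu_k q_k$ satisfying $r\sim s$. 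I have thus manufactured a single tail indifference $r\sim s$, so the remaining task is to combine the two indifferences $p_1\sim q_1$ and $r\sim s$ with weights $\la_1$ and $1-\la_1$ — exactly the two-term instance of the lemma.

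For this two-term merge I would invoke $\ep$-independence twice: once to $p_1\sim q_1$ mixed against $r$, replacing $p_1$ by $q_1$, and once to $r\sim s$ mixed against $q_1$, replacing $r$ by $s$. Each application shifts its mixing weight by less than $\ep$, and transitivity of $\sim$ then delivers $\sum_{k=1}^K\la_k p_k\sim\mu_1 q_1+(1-\mu_1)s$ for some $\mu_1$ with $|\mu_1-\la_1|<2\ep$. Setting $\mu_k=(1-\mu_1)\nu_k$ for $k\ge 2$ distributes the residual weight over the tail, giving nonnegative weights summing to one with the desired indifference $\sum_k\la_k p_k\sim\sum_k\mu_k q_k$. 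Note that for $K=2$ the tail recursion is trivial ($\nu_2=1$), so this construction already produces the sharp drift $2\ep=K\ep$, which is the engine of the whole argument.

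The step I expect to be the main obstacle is the final weight-drift bookkeeping, i.e.\ verifying the per-coordinate bound $|\la_k-\mu_k|<K\ep$. For the distinguished coordinate it is immediate, since $|\la_1-\mu_1|<2\ep\le K\ep$ for $K\ge 2$. For $k\ge 2$ the drift decomposes as
\[
|\la_k-\mu_k|=|(1-\la_1)\tilde{\la}_k-(1-\mu_1)\nu_k|\le(1-\mu_1)|\tilde{\la}_k-\nu_k|+\tilde{\la}_k\,|\mu_1-\la_1|,
\]
so the inductive error $(K-1)\ep$ is scaled by the renormalization factor $1-\mu_1$ and then added to the merge error $|\mu_1-\la_1|$ scaled by $\tilde{\la}_k$. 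A crude bound here gives only $(K-1)\ep+2\ep$, and the delicate point is to keep this combination within the claimed $K\ep$ rather than $(K+1)\ep$; this is precisely where the normalizations $\sum_{k\ge 2}\tilde{\la}_k=\sum_{k\ge 2}\nu_k=1$ and the smallness of the two mixing shifts have to be exploited carefully, and it is the only place where the accounting is not routine.
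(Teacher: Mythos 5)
Your overall strategy (repeated applications of $\ep$-independence glued by transitivity) is the right one, and your two-term merge is sound, but the recursive ``pre-merge the tail, then splice it in'' structure creates a gap that you correctly flag and that is in fact not closable by better bookkeeping. The only information your construction provides is $\abs{\tilde\la_k-\nu_k}<(K-1)\ep$ (inductive hypothesis), $\abs{\la_1-\al'}<\ep$, and $\abs{(1-\al')-\beta'}<\ep$ (the two merge applications, with $\mu_1=1-\beta'$ and $\mu_k=\beta'\nu_k$). From these facts alone the best bound on $\abs{\la_k-\mu_k}=\abs{(1-\la_1)\tilde\la_k-\beta'\nu_k}$ is essentially $(K+1)\ep$: taking $\la_1\approx 0$, $\tilde\la_k\approx 1$, and all three errors adversarially aligned gives $\mu_k\approx\la_k-\bigl[(1-\la_1)(K-1)+2\tilde\la_k\bigr]\ep\approx\la_k-(K+1)\ep$, so the claimed $K\ep$ does not follow. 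Worse, to make the induction close at all you must weaken the invariant: the recursion $D(K)\le D(K-1)+2\ep$ with $D(1)=0$ yields $\abs{\la_k-\mu_k}<2(K-1)\ep$, which exceeds $K\ep$ for every $K\ge 3$. (That weaker constant would still give a qualitative version of Theorem~\ref{thm:EU2}, with $(d+1)^2\ep$ replaced by roughly $2(d+1)^2\ep$, but it does not prove the lemma as stated.)

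The structural problem is that each level of your recursion spends \emph{two} applications of $\ep$-independence, both of which perturb the tail coordinates on top of the drift they have already inherited. The paper's proof avoids this by never extracting the tail as a standalone normalized mixture: it replaces $p_k$ by $q_k$ one at a time \emph{inside} the full $K$-term mixture. Each such replacement is a single application of $\ep$-independence; it moves the replaced coordinate by less than $\ep$, and every other coordinate also by less than $\ep$, because the renormalization $\la'_j=(1-\la'_1)\frac{\la_j}{1-\la_1}$ together with $\la_j\le 1-\la_1$ gives $\abs{\la_j-\la'_j}=\frac{\la_j}{1-\la_1}\abs{\la'_1-\la_1}<\ep$. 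Since there are exactly $K$ replacements and each contributes less than $\ep$ of drift to every coordinate, the total drift is less than $K\ep$. If you restructure your induction so that the inductive step is ``replace one lottery in place, keeping the rest of the mixture intact'' rather than ``merge a pre-resolved tail,'' your argument becomes the paper's and the constant $K\ep$ is recovered.
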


\begin{proof} We may assume without loss that $\la_k>0$ for all $k$.

    Since $p_1\sim q_1$, there is $\la'_1$ with $|\la_1-\la'_1|<\ep$ such that
\[\la_1 p_1 + (1-\la_1) \sum_{k=2}^K \frac{\la_k}{\sum_{j=2}^K \la_j} p_k
\sim \la'_1 q_1 + (1-\la'_1) \sum_{k=2}^K \frac{\la_k}{\sum_{j=2}^K \la_j} p_k
= \la'_1 q_1 + \sum_{k=2}^K \la'_k p_k,
\] where \[
            \la'_k \coloneqq (1-\la'_1)\frac{\la_k}{\sum_{j=2}^K\la_j}.
    \] Observe that
    \[| \la_k-\la'_k | = |\la_k \left( 1-\frac{1-\la'_1}{1-\la_1}\right) |
    = \frac{\la_k}{1-\la_1} | \la'_1-\la_1 | < \ep,\]
    as $\la_k\leq 1-\la_1$.

    By repeating this argument $K$ times, we obtain that
    \[\sum_{k=1}^K \la_k p_k \sim \sum_{k=1}^K \mu_k q_k,
    \] with $|\la_k-\mu_k|<K\ep$.
\end{proof}

    \begin{lemma}\label{lem:approx} Let $p_k\in \Delta$, $k=1,\ldots,K$. Suppose that $\la_k\geq 0$ with $\sum_k \la_k = 1$. Then,
    \[| u(\sum_{k=1}^K \la_k p_k ) - \sum_{k=1}^K \la_k u(p_k)  |<K^2\ep \]
        \end{lemma}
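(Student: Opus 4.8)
The plan is to reduce each $p_k$ to the segment spanned by the extreme lotteries $\bar p$ and $\ul p$, where $u$ can be read off exactly, and then transport the approximation error furnished by Lemma~\ref{lem:approx0} onto the mixture weights. Concretely, for each $k$ I set $q_k = u(p_k)\bar p + (1-u(p_k))\ul p$, so that $p_k \sim q_k$ holds by the very definition of $u$. Applying Lemma~\ref{lem:approx0} to the $K$ indifferences $p_k\sim q_k$ then yields weights $\mu_k\geq 0$ with $\sum_k \mu_k = 1$ and $\abs{\la_k-\mu_k}<K\ep$ such that $\sum_k \la_k p_k \sim \sum_k \mu_k q_k$.

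The key observation is that $\sum_k \mu_k q_k$ is itself a mixture of $\bar p$ and $\ul p$: since $\sum_k \mu_k = 1$, we get $\sum_k \mu_k q_k = \beta\bar p + (1-\beta)\ul p$ with $\beta = \sum_k \mu_k u(p_k)$. On this segment $u$ reads off the coefficient of $\bar p$ exactly, which is where (strict) monotonicity enters: it guarantees that $\al\mapsto \al\bar p + (1-\al)\ul p$ is $\succ$-increasing, hence injective, so that $u(\beta\bar p + (1-\beta)\ul p)=\beta$. Because $u$ represents $\succeq$ and $\sum_k \la_k p_k \sim \sum_k \mu_k q_k$, I conclude $u(\sum_k \la_k p_k) = \beta = \sum_k \mu_k u(p_k)$.

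It then remains only to estimate. Writing $u(\sum_k \la_k p_k) - \sum_k \la_k u(p_k) = \sum_k (\mu_k-\la_k) u(p_k)$ and using $u(p_k)\in[0,1]$ together with $\abs{\mu_k-\la_k}<K\ep$ gives $\abs{u(\sum_k \la_k p_k) - \sum_k \la_k u(p_k)} \leq \sum_k \abs{\mu_k-\la_k} < K\cdot K\ep = K^2\ep$, as claimed. I do not anticipate a serious obstacle: the only points needing care are verifying that Lemma~\ref{lem:approx0} keeps the $\mu_k$ nonnegative and summing to one (so that $\sum_k \mu_k q_k$ is a genuine lottery and $\beta\in[0,1]$), and justifying $u(\beta\bar p + (1-\beta)\ul p)=\beta$ from monotonicity. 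The real content is the device of replacing each $p_k$ by its extremal equivalent $q_k$, which forces the perturbed mixture $\sum_k \mu_k q_k$ to remain on the $\bar p$--$\ul p$ segment, where $u$ is exactly linear and the accumulated weight error converts directly into the $K^2\ep$ bound.
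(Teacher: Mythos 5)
Your proposal is correct and follows essentially the same route as the paper's own proof: replace each $p_k$ by its extremal equivalent $u(p_k)\bar p + (1-u(p_k))\ul p$, invoke Lemma~\ref{lem:approx0} to obtain perturbed weights $\mu_k$ with $\abs{\la_k-\mu_k}<K\ep$, read off $u$ exactly on the $\bar p$--$\ul p$ segment to get $u(\sum_k \la_k p_k)=\sum_k \mu_k u(p_k)$, and convert the weight error into the $K^2\ep$ bound using $u(p_k)\in[0,1]$. The points you flag for care (the $\mu_k$ being nonnegative and summing to one) are already guaranteed by the statement of Lemma~\ref{lem:approx0}, so there is no gap.
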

    \begin{proof} Observe that $$ p_k\sim u(p_k)\bar p + (1-u(p_k))\ul p$$ for each $k$. Hence, by Lemma~\ref{lem:approx0}, we have 
\[\begin{split}
\sum_{k=1}^K \la_k p_k \sim \sum_{k=1}^K \mu_{k} [u(p_k)\bar p + (1-u(p_k))\ul p] = \sum_{k=1}^K \mu_k u(p_k)\bar p \\ +   \sum_{k=1}^K \mu_k(1-u(p_k)) \ul p,
        \end{split}
\] with $|\la_k-\mu_k|<K\ep$.

    Thus we have, by definition of $u$, that\[u(\sum_{k=1}^K \la_k p_k ) = \sum_{k=1}^K \mu_k u(p_k).\]

    Finally,
        \begin{align*}
            |u(\sum_{k=1}^K \la_k p_k ) - \sum_{k=1}^K \la_k u(p_k )|
            & \leq
        |u(\sum_{k=1}^K \la_k p_k ) - \sum_{k=1}^K \mu_k u(p_k)| \\
            & +| \sum_{k=1}^K \la_k u(p_k )- \sum_{k=1}^K \mu_k u(p_k) | \\
              & \leq \sum_{k=1}^K | \la_k- \mu_k| u(p_k) \leq K^2 \ep,
        \end{align*} as $u(p_k)\in [0,1]$ for each $k$.
    \end{proof}

Define $\ell(p)=\sum_{i=1}^{d+1} p_i u(e_i)$. By Lemma~\ref{lem:approx}, $\abs{u(p)-\ell(p)}\leq (d+1)^2\ep$.

\subsection{Proof of Theorem~\ref{thm:AA}}

Define $u: X \to \Re$ by $u(x) = c$, where $c \in \Re_+$ is the unique scalar such that $x \sim c\mathbf{1}$. (Recall the convention $c \equiv c\mathbf{1}$.) By monotonicity and continuity, $u$ is well defined. If $u(x)=c_x$ and $u(y)=c_y$, then $\phi$-approximate independence implies that 
\[
\frac{1}{2}c_x +\frac{1}{2}c_y + \phi(x,y)
\succeq \frac{1}{2} x + \frac{1}{2} y
\succeq \frac{1}{2}c_x +\frac{1}{2}c_y - \phi(x,y).
\] Consequently, by definition of $u$,
\[\abs{u\left(\frac{x+y}{2}\right) - \frac{1}{2}u(x) - \frac{1}{2}u(y)}\leq \phi(x,y).
\]

  In particular, for $y=0$, we have $u(y)=0$, so we obtain 
\begin{equation}\label{eq:basecase}
    \abs{\frac{1}{2}u(2x) - u(x)}\leq \phi(2x,0).
\end{equation}
Define $\da(x) =\phi(x,0)$.

Repeating essentially the same arguments as in \cite{jung1998hyers}, we can prove by induction that for any $n$,
\begin{equation}\label{eq:ineqn}
\abs{2^{-n}u(2^nx) - u(x)}\leq \sum_{i=1}^n 2^{-(i-1)}\da(2^ix).
\end{equation}
For completeness, we include the proof. The base case is Equation~\eqref{eq:basecase}. Suppose that Equation~\eqref{eq:ineqn} holds for $n$. Note that~\eqref{eq:basecase} implies that
  \[
      \abs{\frac{1}{2}u(2^{n+1}x) - u(2^nx)}  \leq \da(2^{n+1} x),
  \] and hence that $\abs{2^{-(n+1)} u(2^{n+1}x) - 2^{-n}u(2^nx)}  \leq 2^{-n}\da(2^{n+1} x)$.

  Now,
\begin{align*}
\abs{2^{-(n+1)}u(2^{n+1}x) - u(x)}
& \leq \abs{2^{-(n+1)}u(2^{n+1}x) -  2^{-n}u(2^n x)} + \abs{2^{-n}u(2^nx) - u(x)} \\
 &  \leq 2^{-n}\da(2^{n+1} x) + \sum_{i=1}^{n} 2^{-(i-1)}\da(2^{i} x),
\end{align*} which proves the induction step.

As a consequence, when $m> n$ we have
\[\begin{split}
\abs{2^{-m}u(2^{m}x) - 2^{-n}u(2^{n}x)}=2^{-n}\abs{2^{-(m-n)}u(2^{m-n} 2^n x) - u(2^{n}x)} \\ \leq 2^{-n} \sum_{i=1}^{m-n} 2^{-(i-1)}\da(2^i 2^n x).
\end{split}\] We have $2^{-n} \sum_{i=0}^{m-n-1} 2^{-i}\da(2^{i+1} 2^n x)\leq 2^{-n}\Theta\to 0$, and hence the sequence $\{2^{-n}u(2^n x) \}$ is Cauchy. Therefore, it is convergent.

Define $v:X\to\Re$ by $v(x)= \lim_{n\to \infty} 2^{-n}u(2^n x)$.

Then,
  \[
\abs{v(x)-u(x)} = \lim_{n\to\infty}
\abs{2^{-n}u(2^nx) - u(x)}\leq \limsup_{n\to\infty}
\sum_{i=0}^{n-1} 2^{-i}\da(2^{i+1}x)\leq \Theta.
\]

Observe that $v$ is monotone increasing because $u$ is monotone, and $v(0)=0$ because $u(0)=0$.

  Note that, since $\sum_{i=0}^{n} 2^{-i}\phi(2^{i}x,2^i y)$ is bounded, we must have that $2^{-n}\phi(2^{n}x,2^{n}y)\to 0$ as $n\to\infty$. Then, by definition of $v$:
\begin{align*}
\abs{v\left(\frac{x+y}{2}\right) - \frac{1}{2}v(x) - \frac{1}{2}v(y)}
& =
\lim_{n\to\infty} 2^{-n}
\abs{u\left(\frac{2^{n} x+2^{n}y}{2}\right) - \frac{1}{2}u(2^{n}x) - \frac{1}{2}u(2^{n}y)} \\
& \leq
\lim_{n \to \infty} 2^{-n} \phi(2^n x, 2^n y) = 0.
\end{align*}

So $v$ satisfies Jensen's equation, that is $v(\frac{x+y}{2})=\frac{v(x)+v(y)}{2}$. But then we know that $2v(\frac{1}{2}x)= v(x)$, as $v(0)=0$, and therefore that
\[
    v(x+y) = 2v(\frac{1}{2}(x+y)) = v(x)+v(y).
\]
Hence, $v$ is additive on $X$ (satisfies the Cauchy equation).  We have also established previously that it is monotone.  Standard arguments then imply that $v$ must be linear; we proceed to spell this out.

Additivity of $v$ implies that $v(kx)=kv(x)$ for all positive integers $k$. Thus $qv(x) = v(qx)$ for all rational $q>0$ by the usual argument for the Cauchy equation.\footnote{For any two positive integers $k,\ell$,
$\ell v(x) = v(\ell x) = v(k \frac{\ell x}{k})= k v(\frac{\ell x}{k})$.}

Consider $r\in\Re_+$ and let $\ul q_k$, $\bar q_k$ be sequences of rational numbers that converge to $r$ and for which $\ul q_k\leq r\leq \bar q_k$. Then, by the monotonicity of $v$, we have
\[
\ul q_k v(x) = v(\ul q_k x) \leq v(rx) \leq v(\bar q_k x)=\bar q_k v(x).
\] Then $\ul q_k\to r$ and $\bar q_k\to r$ imply that $rv(x)=v(rx)$.

In consequence, for any $x\in\Re^d_+$, $v(x)= \sum_{i=1}^d x_i v(e_i)$. We may then extend $v$ to $\Re^d$ by defining $v(x)=\sum_{i=1}^d x_i v(e_i)$ for all $x\in \Re^d$. So $v$ is linear.

%%%%%%%%%%%%%%%%%%%%%%%%
To prove the second statement in the theorem, suppose now that $\succeq$ is homothetic. 
Note that $u$, as defined, is homogeneous of degree one. By homogeneity of $u$, then, $u(x)=2^{-n} u(2^n x)$. So $v(x)=u(x)$.

\subsection{Proof of Corollary~\ref{cor:smoothamb}}

First we deal with the case when  $u(x) = \int_{\Delta} \sqrt{1+(\sum p_i x_i)^2} \, d\mu(p)$. Following the steps in the proof of Theorem~\ref{thm:AA}, we define the candidate linear approximation $v$ as the limit of the scaled utility:$$v(x) = \lim_{n \to \infty} 2^{-n} u(2^n x).$$Substituting the explicit form of $u$:
\[
    v(x) = \lim_{n \to \infty} 2^{-n} \int_{\Delta} \sqrt{1 + (2^n \sum p_i x_i)^2} \, d\mu(p).
\]
Since $f(z) = \sqrt{1+z^2}$ is asymptotically linear (specifically, $\lim_{\alpha \to \infty} \frac{\sqrt{1+(\alpha z)^2}}{\alpha} = z$ for $z>0$), we can pass the limit inside the integral (justified by the Dominated Convergence Theorem, as the difference is uniformly bounded):
\[
    v(x) = \int_{\Delta} \left( \sum p_i x_i \right) d\mu(p) = \sum_{i=1}^d x_i \left( \int_{\Delta} p_i \, d\mu(p) \right) = \sum_{i=1}^d x_i \bar{p}_i.
\]
Thus, $v$ is well-defined and linear.

To bound the error $|u(x) - v(x)|$, we use the triangle inequality on the integral representation derived from the limit definition:
\[
    |u(x) - v(x)| = \left| \int_{\Delta} \left( \sqrt{1+(\sum p_i x_i)^2} - \sum p_i x_i \right) d\mu(p) \right|.
\]
Consider the scalar function $h(z) = \sqrt{1+z^2} - z$ for $z \geq 0$. Observe that $h(0)=1$, $h(z) > 0$, and $h'(z) = \frac{z}{\sqrt{1+z^2}} - 1 < 0$. Thus, $h(z)$ is monotonically decreasing with a maximum value of $1$ at $z=0$.

Applying this bound pointwise to the integrand (where $z = \sum p_i x_i \geq 0$):
\[
    \left| \sqrt{1+(\sum p_i x_i)^2} - \sum p_i x_i \right| \leq 1.
\]
Therefore,
\[
    |u(x) - v(x)| \leq \int_{\Delta} 1 \, d\mu(p) = 1.
\]
This confirms that the global approximation error is bounded by $1$.

Now we consider the case of $f(z)=z-e^{-z}$.

We derive the linear limit $v$ using the definition from Theorem~\ref{thm:AA}:
    \[
        v(x) = \lim_{n \to \infty} 2^{-n} u(2^n x) = \lim_{n \to \infty} 2^{-n} \int_{\Delta} \left( 2^n \sum p_i x_i - e^{-2^n \sum p_i x_i} \right) d\mu(p).
    \]
    Splitting the integral by linearity:
    \[
        v(x) = \lim_{n \to \infty} \left[ \int_{\Delta} \left(\sum p_i x_i\right) d\mu(p) - 2^{-n} \int_{\Delta} e^{-2^n \sum p_i x_i} d\mu(p) \right].
    \]
    The first term is constant with respect to $n$ and equals $\sum x_i \bar{p}_i$. The second term is bounded by $2^{-n}$ (since the exponential is $\leq 1$ for non-negative arguments) and thus converges to 0 as $n \to \infty$. Hence, $v(x) = \sum x_i \bar{p}_i$.
    
    To bound the error, consider the difference:
    \[
        |u(x) - v(x)| = \left| \int_{\Delta} \left( \sum p_i x_i - e^{-\sum p_i x_i} \right) d\mu(p) - \int_{\Delta} \left(\sum p_i x_i\right) d\mu(p) \right|.
    \]
    Simplifying, we obtain:
    \[
        |u(x) - v(x)| = \left| -\int_{\Delta} e^{-\sum p_i x_i} d\mu(p) \right| = \int_{\Delta} e^{-\sum p_i x_i} d\mu(p).
    \]
    Since $x \in \mathbb{R}^d_+$ and $p \in \Delta$, we have $\sum p_i x_i \geq 0$, which implies $0 < e^{-\sum p_i x_i} \leq 1$. Therefore:
    \[
        |u(x) - v(x)| \leq \int_{\Delta} 1 \, d\mu(p) = 1.
    \]
    Thus, the global approximation error is bounded by 1.

\subsection{Proof of Theorem~\ref{thm:approxhomog}}

The proof of Theorem~\ref{thm:approxhomog} is essentially based on \cite{czerwik1992stability}, but following the ideas from Jung that were used in the proof of the previous result. \cite{hyers2012stability} contains some analogous results. The result of Czerwik does not apply directly because his hypotheses are not satisfied. 

The utility function $u$ is defined just as in the previous proof. Then, $x\sim u(x)$ implies that 
$\la u(x) + \bar\phi(u(x)\one,x,\la)\one \succeq \la x$ and 
$\la x \succeq \la u(x) - \ul \phi(x,u(x)\one,\la)\one$. Hence, 
\[ 
\abs{u(\la x) - \la u(x)}\leq \max\{\bar\phi(x,u(x)\one,\la), \ul \phi(x,u(x)\one,\la)\}.
\]

Note that since $\la^n\max\{x_i\}\one\geq \la^n x\geq \la^n \min\{x_i\}\one$, monotonicity of $\succeq$ implies that $\la^n\max\{x_i\}\one\geq u(\la^n x)\geq \la^n \min\{x_i \}$.  This implies that
\begin{align*}
    \bar\phi(\la^n\min\{x_i\}\one,\la^n x,\al) & \geq \bar\phi(u(\la^n x),\la^n x,\al) \\
    \text{ and }
\ul\phi(\la^n x,u(\la^n x),\al)& \leq \ul\phi(\la^n x,\la^n \max\{x_i\}\one,\al).
\end{align*}

Let $\phi(x,\la) \coloneqq \max\{\bar\phi(x,u(x)\one,\la), \ul \phi(x,u(x)\one,\la)\}$.
\[\phi(\la^n x,\al)\leq  \max\{\bar\phi(\la^n\min\{x_i\}\one,\la^n x,\al),\ul\phi(\la^n x,\la^n \max\{x_i\}\one,\al).
\]
Hence, given the hypotheses on $\bar\phi$ and $\ul\phi$, we obtain that 
\begin{equation}\label{eq:homogvanish}
 \la^{-n}\phi(\la^n x,\al)\to 0.
\end{equation}

We claim that 
\begin{equation}\label{eq:homoginduc}
  \abs{\la^{-n}u(\la^n x) - u(x)}\leq \sum_{j=0}^{n-1}\la^{-(j+1)}\phi(\la^j x,\la).  
\end{equation}
The base case ($n=1$) is clear from the inequality:
\[
\abs{\la^{-1}u(\la x) - u(x)}\leq \frac{1}{\la}\phi(x,\la)
\]
So suppose that~\eqref{eq:homoginduc} is true for $n$. Then the base case with $\la^n x$ in place of $x$ yields
\[
\abs{\la^{-1}u(\la \la^n x) - u(\la^n x)}\leq \frac{1}{\la}\phi(\la^n x,\la).
\]
So we have 
\begin{align*}
\abs{\la^{-(n+1)}u(\la^{n+1} x) - u(x)} & \leq 
\la^{-n}\abs{\la^{-1}u(\la \la^n x) - u(\la^n x)}  + \abs{\la{-n}u(\la^n x) - u(x)} \\
& \leq \la^{n+1}\phi(\la^n x,\la) + \sum_{i=0}^{n-1}\la^{-(j+1)}\phi(\la^j x,\la),    \end{align*}
which proves the inequality~\eqref{eq:homoginduc} by induction.

Next, we show that the sequence $\{\la^{-n} u(\la^n x)\}$ has the Cauchy property. Indeed, assuming that $n> m$,
\begin{align*}
\abs{\la^{-n}u(\la^n x) - \la^{-m} u(\la^{m} x)}
& = \la^{-m}\abs{\la^{-(n-m)}u(\la^{n-m}\la^m x) - u(\la^{m} x)} \\
&\leq \la^{-m} \sum_{j=0}^{n-m-1}\la^{-(j+1)}\phi(\la^{j+m} x,\la)= \sum_{j=m}^{n-m-1}\la^{-(j+1)}\phi(\la^{j} x,\la) \\ 
& \leq 
\sum_{j=m}^{n-m-1}\la^{-(j+1)} \bar\phi(\la^{j} x,u(\la^{j} x)\one,\la) \\
& +
\sum_{j=m}^{n-m-1}\la^{-(j+1)} \ul\phi(\la^{j} x,u(\la^{j} x)\one,\la).
\end{align*}

By the Cauchy property, the sequence is convergent. So we may define $v(x) \coloneqq \lim_{n\to \infty} \la^{-n}u(\la^n x)$. Then we have, for any $\al> 0$, that 
\[
\abs{v(\al x) - \al v(x)}
= \lim_{n\to\infty}
\abs{\la^{-n}u(\al \la^n x) - \la^{-n}\al u(\la^n x)}
\leq \lim_{n\to\infty}
\la^{-n}\phi(\la^n x,\al) = 0, 
\] using Equation~\eqref{eq:homogvanish}. So $v$ is homogeneous.

Finally, 
\[
\abs{v(x)-u(x)} = \lim_{n\to\infty} \abs{\la^{-n}u(\la^n x) - u(x)} \leq \sum_{j=0}^{\infty}\la^{-(j+1)}\phi(\la^j x,\la) \leq 2\Theta.  
\]

\subsection{Proof of Theorem~\ref{thm:quasiquasicon}}

Let $U(c)=\{x \mid x\succeq c \}$ be the upper contour set at $c\one$, and $\bar U(c)$ the convex hull of $U(c)$. Define 
$u(x)=\sup\{c \mid x\in U(c) \}$ and
$v(x)=\sup\{c \mid x\in \bar U(c) \}$.
By monotonicity and continuity, $u,v:\Re^d_+\to\Re$ are well defined and $x\sim u(c)\one$. So $u$ represents $\succeq$.

The function $v$ is quasi-concave because $v(x)\geq v(z)$ and $v(y)\geq v(z)$ imply that $x,y\in \bar U(v(z))$. So $\la x + (1-\la) y\in \bar U(v(z))$, and hence $v(\la x + (1-\la) y)\geq v(z)$.

Note that $u(x)\leq v(x)$, as $U(c)\subseteq \bar U(c)$ for all $c$. If $u(x)<v(x)$ then $x\in \bar U(v(x))\setminus U(v(x))$. 

By a version of Carathéodory's theorem (see \cite{hanner51}), there are at most $d$ points $x_1,\ldots, x_K$ in $U(v(x))$ with $x=\sum_{i=1}^K\la_i x_i$, $\la_i>0$, and $\sum_{i=1}^K\la_i =1$. 

By the $\ep$-uncertainty aversion axiom, if $\frac{1}{\sum_{i=1}^k \la_i} \sum_{i=1}^k \la_ix_i \succeq c$ and $x_{k+1}\succeq c$, then 
\[
\left( \frac{\sum_{i=1}^k \la_i }{\sum_{i=1}^{k+1} \la_i}\right)
\frac{1}{\sum_{i=1}^k \la_i} \sum_{i=1}^k \la_ix_i + \frac{\la_{k+1}}{\sum_{i=1}^{k+1} \la_i}x_{k+1}\succeq c-\ep.
\]
Since $x_1\succeq v(x)$, and $x_{k+1}\succeq v(x) \succeq v(x)-k \ep$, for $k=1,\ldots, K-1$, we conclude that $\sum_{i=1}^k \la_i x_i \succeq v(x)-k\ep\leq v(x)-d\ep$ (as $K\leq d$). Hence $x\in U(v(x)-d\ep)$, and thus $u(x)\geq v(x)-d\ep$.

\subsection{Proof of Theorem~\ref{thm:exp}}

The ideas are very similar to what we have already used in other results, so we only offer a sketch of the proof. It is perhaps worth mentioning that the proof requires slightly weaker conditions than were stated in the theorem. In particular, that 
\[
\sum_{i=0}^{\infty} 2^{-(i+1)} \abs{\log \da_x(2^i t, 2^i t)}
\] is summable provides the approximation guarantee, but we also need that \[
\lim_{n\to\infty} 2^{-n}\da_x(2^n t, 2^n s) = 0
\] for the Cauchy property below. 

We proceed to briefly sketch the proof.

Suppose that ${\succeq}\in \mathcal{D}$ and let the associated discount factor be $d$, normalized so that $d(1)=1$. Then we have 
\[
xd(t) = yd(s+t) \text{ and } x = \da_x(s,t) y f(s).
\]
Hence, we have 
\[ \frac{d(s+t)}{d(s)d(t)} = \da_x(s,t).
\] Letting $g=\log d$, we obtain
\[
\abs{g(s+t) - g(s) - g(t)} = \psi_x(s,t) \coloneqq \abs{\log \da_x(s,t)}.
\]

Setting $s=t$ we see that:
\[
\abs{\frac{1}{2}g(2t) - g(t)} =  \frac{1}{2}\psi_x(t),
\] where we write $\psi_x(t)$ for $\psi_x(t,t)$ as a notational shortcut.

Now, just as in other results, we can show that $\abs{2^{-n}g(2^n t) - g(t)}$ is bounded above by $\sum_{i=0}^{n-1} 2^{-(i+1)}\psi_x(2^i t)$, and use this to prove the Cauchy property of the sequence $2^{-n}g(2^n t)$. So we may define $$h(t)= \lim_{n\to \infty} 2^{-n} g(2^n t),$$ and show that 
\[
\abs{g(t)- h(t)}\leq \sum_{i=0}^{\infty} 2^{-(i+1)}\psi_x(2^i t).
\] Again, familiar arguments establish that $h$ is additive. 

The only solution to the Cauchy equation over $T$ is $h(t)=c t$. Hence we consider $e^{h(t)} = \g^t$, where $\g\coloneqq e^{c}$. Note that 
$$\g = e^{h(1)} = \lim_{n\to\infty}  2^{-n}\log d(2^n).$$
Finally,
\[
\Theta\geq \abs{g(t)- h(t)} = \abs{\log d(t) - \log \g^t} = \abs{\log \frac{d(t)}{\g^t}}.
\]    

\subsection{Proof of Theorem~\ref{thm:exp2}}

From the axiom we see that $Xd(t)=Yd(t+s)$ and $X = Zd(s)$, so 
\[
\frac{d(t+s)}{d(t)d(s)} - 1 = \frac{Z}{Y} -1 = W d(t+s),
\] so 
\[
\abs{f(t+s) - f(s)f(t)} = \abs{W}\leq \Theta,
\] where $f=1/d$. Since $d(t)\to 0$, $f$ is unbounded. By Theorem~2 of \cite{baker1979}, it follows that $f$, and therefore, $d$, is exponential. Theorem~2 in \cite{baker1979} is stated for functions that are defined over the integers, but it is clear from the proof that the result applies in our case. 

It is perhaps worth pointing out that, given $d\in\bar{\mathcal{D}}$ satisfying the $(X,\Theta)$ axiom, the number $\gamma$ is defined as follows: Let $\tau$ be such that $d(\tau)\leq \min\{\frac{1}{4},\frac{1}{4\Theta} \}$. Then $\gamma = \sqrt[\tau]{d(\tau)}$.

\subsection{Proof of Theorem~\ref{thm:exp3}}

\begin{proof}
For any $(x,t)$ with $t>0$, $(x,0)\succ (x,t)$. By the second axiom, $(x,t)\succeq (x',0)$ for some $x'$. By continuity and monotoniciy, then, there exists a unique $\hat x\in (-\infty,\bar x)$ with $(\hat x,0)\sim (x,t)$. Define a function $u:X\times\Re_+\rightarrow\Re$ by $(u(x,t),0)\sim(x,t)$. The function $u$ is a continuous utility representation of $\succeq$.

Axiom~\ref{ax:statep} implies that, for any $s>0$, there is $\eta$ with $\abs{\eta}\leq \ep$ such that
\[
(u(\bar x,t),s)\sim (\bar x,t+s+\eta).
\]

But since $(\bar x,t+s+\eta)\sim (u(\bar x,t),s)\sim (u(u(\bar x,t),s),0)$, we see by Axiom~\ref{ax:lipsch} that
\[
(\bar x,t+s)\sim (u(u(\bar x,t),s)+\eta',0),
\] with $\abs{\eta'}\leq \la \abs{\eta}$. Thus, 
\[
u(\bar x,t+s) = u(u(\bar x,t),s)+\eta'
\] with $\abs{\eta'}\leq \la \abs{\eta}\leq \la \ep$. (Observe that when $\ep=0$ we reach this conclusion with $\eta'=0$, without the use of Axiom~\ref{ax:lipsch}.)

By Axiom~\ref{ax:deprec}, for any $x\in (-\infty,\bar x]$ there is $t'$ with $(x,0)\succ (\bar x,t')$. By the continuity of $\succeq$, there exists $t$ with $(x,0)\sim(\bar x,t)$. As a consequence, $t\mapsto u(\bar x,t)$ is surjective. Given that $u$ strictly decreases monotonically in $t$, it is, in fact, a bijection from $\Re_+$ to $(-\infty,\bar x ]$. 

The rest of the proof is borrowed from \cite{mach2006stability}. Define $g(t) = u(\bar x,t)$ and $h(x,t) = g(t+ g^{-1}(x))$. Then $u(\bar x,g^{-1}(x))=x$. So,
\begin{align*}
    \abs{u(x,t) - h(x,t)} & = \abs{u(u(\bar x,g^{-1}(x)),t) - g(t+ g^{-1}(x))} \\
    & = \abs{u(u(\bar x,g^{-1}(x)),t) - u(\bar x,t+ g^{-1}(x))}\leq\la\ep.
\end{align*} Letting $\gamma=g^{-1}$ finishes the proof. Note that $u(\bar x,0)=\bar x$ and $\inf \{u(\bar x,t) \mid t\in\Re_+ \}=-\infty$ imply that $\gamma(\bar x)=0$ and $\sup \{\gamma(x):x\in (-\infty,\bar x] \}=\infty$.
\end{proof}

\bibliographystyle{ecta}
\bibliography{approxCauchy}

\end{document}